\newif\ifanonym
\newtheorem{theorem}{Theorem}[section] 
\newtheorem{lemma}[theorem]{Lemma}
\newtheorem{claim}[theorem]{Claim}
\newtheorem{direction}{Direction}
\theoremstyle{definition}
\newtheorem{definition}[theorem]{Definition}
\def\cD{{\cal D}}
\def\bbN{{\mathbb N}}
\def\bbR{{\mathbb R}}
\newcommand{\norm}[1]{\left\| {#1} \right\|}
\newcommand*{\defeq}{\stackrel{\text{def}}{=}}
\newcommand{\quotes}[1]{{``{#1}"}}
\newcommand{\myleft}{\textrm{left}}
\newcommand{\myright}{\textrm{right}}
\newcommand{\mytop}{\textrm{top}}
\newcommand{\mybot}{\textrm{bottom}}
\newcommand{\mytopleft}{\textrm{top\_left}}
\newcommand{\mytopright}{\textrm{top\_right}}
\newcommand{\mymidleft}{\textrm{middle\_left}}
\newcommand{\mymidright}{\textrm{middle\_right}}
\newcommand{\mybotleft}{\textrm{bottom\_left}}
\newcommand{\mybotright}{\textrm{bottom\_right}}
\title{Exact Flow Sparsification Requires Unbounded Size%
  \footnote{The first version of this paper proved a weaker statement of \cref{:thm:refutation-of-seymour's-conjectures} with $4$ commodities.
    The current statement has only $3$ commodities,
    and now fully refutes Seymour's conjectures.
  }
}
\author{Anonymous Authors}
\author{
Robert Krauthgamer%
\thanks{%
    Work partially supported by ONR Award N00014-18-1-2364,
    the Israel Science Foundation grant \#1086/18,
    the Weizmann Data Science Research Center,
    and a Minerva Foundation grant.
    Email: \texttt{robert.krauthgamer@weizmann.ac.il}
}
\qquad Ron Mosenzon%
\thanks{%
  Email: \texttt{ron.mosenzon@weizmann.ac.il}
}
\\
Weizmann Institute of Science
}
\begin{document}

\setcounter{page}{1}
\maketitle

\begin{abstract}
Given a large edge-capacitated network $G$ and a subset of $k$ vertices called \emph{terminals},
an (\emph{exact}) \emph{flow sparsifier} is a small network $G'$
that preserves (exactly) all multicommodity flows that can be routed between the terminals.
Flow sparsifiers were introduced by Leighton and Moitra [STOC 2010],
and have been studied and used in many algorithmic contexts.

A fundamental question that remained open for over a decade,
asks whether every $k$-terminal network admits an exact flow sparsifier
whose size is bounded by some function $f(k)$
(regardless of the size of $G$ or its capacities).
We resolve this question in the negative
by proving that there exist $6$-terminal networks $G$ 
whose flow sparsifiers $G'$ must have arbitrarily large size. 
This unboundedness is perhaps surprising,
since the analogous sparsification that preserves all terminal cuts
(called \emph{exact cut sparsifier} or \emph{mimicking network}) 
admits sparsifiers of size $f_0(k)\leq 2^{2^k}$ 
[Hagerup, Katajainen, Nishimura, and Ragde, JCSS 1998].
    
We prove our results by analyzing the set of all feasible demands in the network,
known as the \emph{demand polytope}.
We identify an invariant of this polytope,
essentially the slope of certain facets,
that can be made arbitrarily large even for $k=6$,
and implies an explicit lower bound on the size of the network. 
We further use this technique to answer, again in the negative, 
an open question of Seymour [JCTB 2015] regarding flow-sparsification
that uses only contractions and preserves the infeasibility of one demand vector. 
\end{abstract}

\newpage

\section{Introduction}

Graph compression is a powerful paradigm in the design of graph algorithms. Where one reduces the size of the graph before performing heavy computation on it, while preserving properties of the graph that are needed for downstream computation. This paradigm is known to be extremely useful for faster computation, smaller memory requirement, as well as better accuracy when using approximation algorithms.

We study a form of graph compression known as \emph{vertex sparsification}, where one is given an undirected network $G = (V,E,c)$ with edge capacities $c:E \to \bbR_{\geq 0}$, and a small set $T \subseteq V$ of vertices from $G$ called \emph{terminals}.
The goal is to create a small network $G' = (V',E',c')$ called a vertex sparsifier, which may have different vertices, edges, and capacities than $G$, but includes the same set of terminals $T \subseteq V'$, and preserves some desired relationship between the terminals.
We study flow sparsifiers,
where the relationship of interest is the set of feasible multicommodity flows,
as introduced by Leighton and Moitra~\cite{LM10}. 

Let us recall the relevant definitions.
Given $G$ and $T$ as above,
a pair of terminals $i \in \binom{T}{2}$ is referred to as a \emph{commodity},
and the two terminals of each commodity are arbitrarily assigned to be the source and sink of that commodity.
A \emph{demand vector} $\mathbf{d} \in \bbR_{\geq 0}^{\binom{T}{2}}$
assigns a non-negative value $d_i$ to each commodity $i \in \binom{T}{2}$.
A multicommodity flow $f$ that realizes $\mathbf{d}$ is a collection of $\binom{|T|}{2}$ flow functions, namely, $f_i:E \to \bbR$ for each commodity $i \in \binom{T}{2}$,
such that $f_i$ ships $d_i$ units of flow from the source to the sink of commodity $i$ (satisfying flow conservation at every other vertex), and together these flows satisfy the the following capacity constraint
\[
 \forall e \in E, \qquad \sum_{i \in \binom{T}{2}} |f_i(e)| \leq c(e) .
\]
Throughout, one fixes an arbitrary orientation of the edges of $G$, 
so that the sign of the flow $f_i(e)$ determines its direction. 
A demand vector $\mathbf{d}$ is called \emph{feasible} in $G$
if there exists a multicommodity flow in $G$ that realizes it.
The set of all feasible demands in $G$, 
called the \emph{demand polytope} (or \emph{throughput polytope}, see e.g.~\cite{RBC07}) 
is defined as 
\[
 \cD(G) \defeq \{\mathbf{d} \in \bbR_{\geq 0}^{\binom{T}{2}} : \text{$\mathbf{d}$ is feasible in $G$}\} .
\]
It is not hard to see that $\cD(G)$ is indeed a polytope. 

\begin{definition} [Exact Flow Sparsifier] \label{:def:flow-sparsifier}
A network $G'=(V',E',c')$ is called an \emph{exact flow sparsifier}
(or a flow sparsifier of quality $1$) of a network $G=(V,E,c)$
if they share the same terminal set $T$ and
\[
 \cD(G') = \cD(G) .
\]
\end{definition}

\subsection{Results} \label{:subsec:results}

We answer the following problem, that is open since \cite{LM10} and is well-known to experts:
Does every $k$-terminal network $G$ have an exact flow sparsifier of size bounded by some function $f(k)$?
Interestingly, this question is open even if the sparsifier is allowed $O(1)$-approximation,
stated explicitly in \cite[Section 7]{Moitra11} and \cite[Section 1]{Chuzhoy12}.
Our main result resolves this question in the negative (for exact sparsifiers) by showing that even $6$-terminal networks can require arbitrarily large sparsifiers.

\begin{restatable}{theorem}{MainResult} \label{:thm:main-result}
For every $k \geq 6$, there exists an infinite family $\{G_l\}_{l \in \bbN}$ of $k$-terminal networks,
such that every exact flow sparsifier of every $G_l$ must have at least $l$ vertices.
\end{restatable}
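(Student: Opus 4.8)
The plan is to reason entirely about the demand polytope through the standard LP-duality characterisation of multicommodity-flow feasibility: for every network $G=(V,E,c)$ with terminal set $T$,
\[
  \cD(G)=\Bigl\{\mathbf d\in\bbR_{\ge 0}^{\binom{T}{2}}\ :\ \langle \mathbf d,\ \mu|_T\rangle\le \textstyle\sum_{uv\in E}c(uv)\,\mu(u,v)\ \text{ for every pseudometric $\mu$ on $V$}\Bigr\},
\]
where $\mu|_T:=(\mu(s_i,t_i))_{i\in\binom{T}{2}}$. Since the pseudometric cone on $|V|$ points is polyhedral, it suffices to let $\mu$ range over its finitely many extreme rays, so every facet of $\cD(G)$ is either a coordinate facet $\{d_i=0\}$ or has normal vector $\mu|_T$ for some extreme ray $\mu$ of the pseudometric cone on $V$. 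This gives the key \emph{a priori} fact, depending only on $n:=|V|$: normalising each extreme ray so that its smallest positive entry equals $1$, its largest entry is at most some finite $R(n)$ (there are finitely many extreme rays), and hence any two positive coordinates of a non-coordinate facet normal of $\cD(G)$ are within a factor $R(n)$ of each other. We may take $R$ non-decreasing, and $R(n)<\infty$ for every $n$ is all that is needed (it is in fact explicit from the structure of the metric cone).

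It is convenient to first dispose of the reduction from $k=6$ to general $k\ge 6$. Given a family $\{H_l\}_{l\in\bbN}$ of $6$-terminal networks such that every exact flow sparsifier of $H_l$ has at least $l$ vertices, let $G_l$ be $H_l$ together with $k-6$ extra isolated vertices declared to be terminals. Then $\cD(G_l)=\cD(H_l)\times\{0\}$, so any exact flow sparsifier $G'$ of $G_l$, regarded as a $6$-terminal network on the original six terminals (the other nominal terminals now acting as Steiner vertices), satisfies $\cD(G')=\cD(H_l)$ and therefore has at least $l$ vertices. So it remains to construct the $6$-terminal family.

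The core is an explicit construction: for each $l$ I would build a $6$-terminal network $H_l$ — assembled from $\Theta(l)$ copies of a fixed small gadget glued in series with the six terminals placed at the two ends — such that $\cD(H_l)$ has a facet $F_l$ whose normal (unique up to scaling) is a metric $\nu^{(l)}$ on the six terminals in which two of the terminal-pair distances lie in ratio $r(l)$, with $r(l)\to\infty$. Verifying this has two halves: \emph{validity} — producing a pseudometric $\mu$ on $V(H_l)$ with $\mu|_T=\nu^{(l)}$ for which $\langle\mathbf d,\mu|_T\rangle\le\sum_{uv}c(uv)\mu(u,v)$ holds for all $\mathbf d\in\cD(H_l)$, equivalently that no multicommodity flow beats this ``cut'' bound — and \emph{facetness} — exhibiting $\dim\cD(H_l)-1$ affinely independent feasible demands at which the inequality is tight, by writing down explicit multicommodity flows. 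Granting the construction, any exact flow sparsifier $G'$ of $H_l$ has $F_l$ as a facet of $\cD(G')=\cD(H_l)$; since the ratio $r(l)$ shows $F_l$ is not a coordinate facet, its normal $\nu^{(l)}$ equals, up to positive scaling, $\mu'|_T$ for some extreme ray $\mu'$ of the pseudometric cone on $V(G')$, whence $R(|V(G')|)\ge r(l)$ and so $|V(G')|\ge R^{-1}(r(l))$. Reindexing $\{H_l\}$ so that this lower bound is at least $l$ completes the proof.

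The main obstacle is the construction, and within it the \emph{validity} half. One must engineer the series gadget so that the long distance through the chain simultaneously (i) cannot be ``short-circuited'' by any clever rerouting of the (three or more) commodities — this is the genuine combinatorial content, and precisely where the gadget has to be tuned so that the associated metric is tight against the flow polytope rather than merely the cut conditions — and (ii) yields a facet whose normal has an unbounded distance ratio $r(l)$. Everything else (the duality description of $\cD$, the finiteness of $R(n)$, the reduction in $k$, and the observation that the large-ratio invariant is preserved by the identity $\cD(G')=\cD(H_l)$) is routine once such a gadget is in hand.
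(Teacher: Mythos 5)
Your overall framework mirrors the paper's: isolate a numerical invariant of the demand polytope that (a) is bounded in terms of the number of vertices of \emph{any} network realizing that polytope, and (b) can be driven to infinity by $6$-terminal networks. Your route to (a), via the dual metric-inequality description of $\cD(G)$ and the bounded coordinate ratio $R(n)$ of extreme rays of the pseudometric cone on $V$, is a legitimate alternative to the paper's primal argument (a flow LP with $\{-1,0,1\}$ coefficients plus Cramer's rule, giving the $(n^4)!$ bound on the ``tradeoff''), and your padding reduction from $k=6$ to general $k\geq 6$ is fine. The genuine gap is (b): the construction you yourself call ``the core'' and ``the main obstacle'' is never given. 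You say you \emph{would} build a series gadget whose demand polytope has a facet with unbounded coordinate ratio, and you correctly identify that both validity (a metric/cut bound that no rerouting of flow can beat) and facetness must then be verified — but no gadget, no capacities, no metric, and no flows are exhibited. This is exactly the content of the paper's Proposition~\ref{:prop:unbounded-tradeoff-family}, whose explicit network $G_M$ consists of a length-$M$ path between the two middle terminals that alternates across a bipartition, unit-capacity edges from each path vertex to the top and bottom terminals on its side, and a capacity-$2M$ edge joining the two top terminals; two demands of value $M$ saturate the top and bottom cuts, forcing the middle commodity onto the long path, and this is what makes the invariant blow up. Without such a construction your argument does not prove the theorem.

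A secondary point: your formulation actually demands more than the paper's. You need the large-ratio inequality to be facet-defining, i.e., you must exhibit $\dim\cD(H_l)-1$ affinely independent tight feasible demands, whereas the paper's ``maximum tradeoff at a point'' invariant requires only one feasible point, one explicit flow witnessing a feasible tradeoff of value $M$, and one cut-counting argument showing the maximum tradeoff is finite — no facet ever has to be identified. So even once a suitable gadget is found, your plan leaves extra verification work that the paper's formulation deliberately sidesteps (the paper notes that a facet-based variant is possible but harder to instantiate at constant $k$).
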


This theorem %
reveals a stark contrast with the analogous sparsification that preserves all terminal cuts,
called \emph{exact cut sparsifiers} (or \emph{mimicking networks}),
which admit constructions (exact cut sparsifiers)
of size $f_0(k) \leq 2^{2^k}$ \cite{HKNR98, KR14}.
Furthermore, it is known that for networks $G$ with at most $4$ terminals, an exact cut sparsifier of $G$ is also an exact flow sparsifier of $G$ (because there is no flow-cut gap, see e.g.~\cite{AGK14}),
and therefore such networks admit exact flow sparsifiers of size bounded by $f_0(4)$.
This means that the condition $k \geq 6$ in the theorem is almost tight.
We further note that the family of networks used in \cref{:thm:main-result} excludes a fixed minor, namely, $K_7$. 
We discuss these issues further in \cref{:sec:further-directions}.

We further use our techniques to answer a question of Seymour \cite{Seymour15}
regarding a related notion of flow-sparsification,
that uses only contractions and preserves the infeasibility of one demand vector. 
This is discussed in \cref{:subsec:seymour's-conjectures}.

\subsection{Techniques and Proof Outline}

Our proof of \cref{:thm:main-result} is based on identifying a numerical feature of the demand polytope that forces a large sparsifier network. Some natural candidates are the maximum size of feasible demands, or the bit complexity required to represent the polytope. These candidates fail because even a small sparsifier can have arbitrarily large capacities, and thus can also represent an unbounded number of bits. To circumvent this problem, we identify a feature that can be upper bounded in a manner independent of the capacities, yet can grow arbitrarily large as the network size increases (even as the number of terminals remains fixed).

In essence, $\cD(G)$ represents all possible tradeoffs between the values of the commodities. Since this is too complicated, we focus on the tradeoff between just two commodities, at the neighborhood of a single point $\mathbf{d} \in \bbR_{\geq 0}^{\binom{T}{2}}$. We then use this tradeoff as our numerical feature. 
We remark that our proof can be adapted to rely on the number of facets of the polytope,
a numerical feature that is a widely used in polyhedral theory.%
\footnote{However, it seems more difficult to prove that the facet count can be arbitrarily large when the number of terminals is constant,
as our proof method is essentially a reduction to tradeoff argument. 
Furthermore, we believe that our tradeoff feature is more versatile for analyzing demand polytopes, as exemplified by our refutation of Seymour's conjecture \cite{Seymour15} 
in \Cref{:subsec:seymour's-conjectures}, 
which does not seem to have an analogous proof using facet counting.
}

\begin{definition}[Tradeoff]
Let $\mathbf{d} \in \bbR_{\geq 0}^{\binom{T}{2}}$ be a point in the demand polytope $\cD(G)$ of a network $G$ with terminal set $T$.
We say that $\tau \geq 0$ is a \emph{feasible tradeoff} between commodity $i \in \binom{T}{2}$ and commodity $j \in \binom{T}{2}$ at the point $\mathbf{d}$ if there exists a vector $\mathbf{d}^{\Delta} \in \bbR^{\binom{T}{2}}$ with entries $d^{\Delta}_j < 0$ and $d^{\Delta}_i = - \tau d^{\Delta}_j$ while all other entries are zero, such that $\mathbf{d} + \mathbf{d}^{\Delta} \in \cD(G)$.%
\footnote{Observe that if $\tau > 0$ is a feasible tradeoff then also every $\tau' \in [0,\tau]$ is a feasible tradeoff, because the polytope $\cD(G)$ is down-monotone.}
The \emph{maximum tradeoff} between commodities $i$ and $j$ at the point $\mathbf{d}$ is defined as
\[
  \tau_{\mathbf{d},i,j}
  \defeq
  \sup\{\text{feasible tradeoff $\tau \geq 0$ between commodities $i$ and $j$ at point $\mathbf{d}$}\} .
\]
\end{definition}

Observe that $\tau_{\mathbf{d},i,j}$ can be infinite.
However, we are interested in finite $\tau_{\mathbf{d},i,j}$, 
and indeed the aforementioned numerical feature of the demand polytope
is the largest finite $\tau_{\mathbf{d},i,j}$ over all $\mathbf{d},i,j$. 
The maximum tradeoff $\tau_{\mathbf{d},i,j}$ has a geometric interpretation in terms of the demand polytope $\cD(G)$. Consider the two-dimensional plane of points that differ from $\mathbf{d}$ only in coordinates $i$ and $j$, then the intersection of this plane with $\cD(G)$ is a two-dimensional polytope.
When $\mathbf{d}$ is strictly inside this polytope, $\tau_{\mathbf{d},i,j}$ is infinite. 
However, when $\mathbf{d}$ lies on a one-dimensional face of this two-dimensional polytope, $\tau_{\mathbf{d},i,j}$ is exactly the slope of this face, 
see \cref{:fig:tradeoff-definition} for illustration.
Furthermore, if the point $\mathbf{d}$ is on exactly one facet of $\cD(G)$,
say a facet $\sum_{i' \in \binom{T}{2}} a_{i'} d_{i'} \leq b$, 
then the maximum tradeoff is $\frac{a_j}{a_i}$,
which is essentially a \quotes{directional slope} of this facet. 

\begin{figure}[ht]
\includegraphics[scale=0.8]{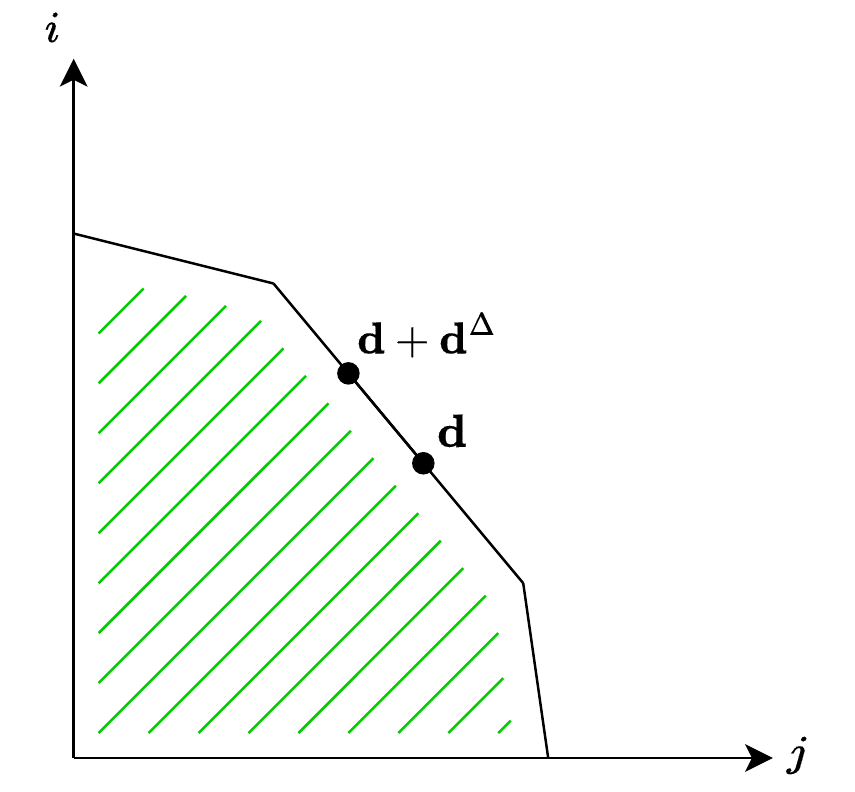}
\centering
\caption{The two-dimensional plane of points that differ from $\mathbf{d}$ only in coordinates $i$ and $j$. 
The region inside $\cD(G)$ is shown in green stripes. The vector $\mathbf{d}^{\Delta}$ attains the maximum tradeoff $\tau_{\mathbf{d},i,j}$, which is the slope of the face on which $\mathbf{d}$ lies.
}
\label{:fig:tradeoff-definition}
\end{figure}

\cref{:thm:main-result} follows from the next two propositions,
which we prove 
in \cref{:sec:proof-of-tradeoff-upperbound} and \cref{:sec:unbounded-tradeoff-family}, respectively.

\begin{restatable}[Tradeoff Bound]{proposition}{TradeoffUpperBoundProposition} \label{:prop:tradeoff-upper-bound}
Let $G$ be an $n$-vertex network with terminal set $T$,
let $\mathbf{d} \in \cD(G)$ be a point, 
and let $i^* \in \binom{T}{2}$ and $j^* \in \binom{T}{2}$ be two distinct commodities.
If the maximum tradeoff $\tau_{\mathbf{d},i^*,j^*}$ is finite,
then it is bounded by $\tau_{\mathbf{d},i^*,j^*} \leq (n^4)!$.
\end{restatable}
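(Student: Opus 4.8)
The plan is to prove that a finite $\tau_{\mathbf d,i^*,j^*}$ is always the directional slope $a_{j^*}/a_{i^*}$ of some facet $\langle a,\mathbf d\rangle\le b$ of $\cD(G)$ with $a_{i^*}>0$, and then to bound that ratio by showing the facet normal $a$ may be taken to be a nonnegative \emph{integer} vector with $a_{i^*}\ge 1$ and $a_{j^*}\le(n^4)!$. The feature to exploit is that although $\cD(G)$ and hence its facets depend on the capacities $c$, the facet \emph{directions} will not: $c$ will appear only on the right-hand side of a linear system, never in its coefficient matrix.

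\emph{Step 1: reduce to a facet slope.} Because $\cD(G)$ is down-monotone, a standard argument shows that every facet of $\cD(G)$ other than a nonnegativity facet $\{d_i=0\}$ has the form $\langle a,\mathbf d\rangle\le b$ with $a\ge 0$. Using the geometric description of $\tau_{\mathbf d,i^*,j^*}$ recalled in the text — it is the slope, at $\mathbf d$, of the $2$-dimensional polytope $\cD(G)\cap\Pi$, where $\Pi$ is the plane through $\mathbf d$ spanned by the $i^*$ and $j^*$ directions — one checks that when $\tau_{\mathbf d,i^*,j^*}$ is finite and positive, the relevant edge of $\cD(G)\cap\Pi$ is the trace on $\Pi$ of a facet $\langle a,\mathbf d\rangle\le b$ of $\cD(G)$ passing through $\mathbf d$, and then $a_{i^*}>0$ and $\tau_{\mathbf d,i^*,j^*}=a_{j^*}/a_{i^*}$. (The cases $\tau_{\mathbf d,i^*,j^*}\in\{0,\infty\}$, and the degenerate cases where the relevant edge lies in a coordinate hyperplane or $\cD(G)$ is not full-dimensional, are immediate.) Thus it suffices to bound $a_{j^*}/a_{i^*}$ over all facets of $\cD(G)$ with $a_{i^*}>0$.

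\emph{Step 2: a small LP for feasibility, with the capacities only on the right.} Deleting loops and merging parallel edges (summing capacities) changes neither $\cD(G)$ nor the claimed bound, so assume $G$ is simple; then $|E|\le\binom{n}{2}$ and $\binom{|T|}{2}\le\binom{n}{2}$. Fix the orientation of $E$. Feasibility of $\mathbf d\in\bbR^{\binom{T}{2}}$ is equivalent to solvability, in real variables $f_i(e),g_i(e)$ (for $i\in\binom{T}{2}$, $e\in E$), of the system comprising the flow-conservation equalities of every commodity $i$ (in which $\pm d_i$ occurs only at $s_i,t_i$, with coefficient $\pm1$), the bounds $-g_i(e)\le f_i(e)\le g_i(e)$, and the capacity constraints $\sum_i g_i(e)\le c(e)$. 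Written as $Mz\le q$ with $z=(f,g,\mathbf d)$ (each equality split into two inequalities), the matrix $M$ has all entries in $\{0,\pm1\}$, the vector $q$ has entries in $\{0\}\cup\{c(e):e\in E\}$, the number $N_0:=2|E|\binom{|T|}{2}\le n^4/2$ of $(f,g)$-coordinates is small, and each $d_i$-column of $M$ has at most $4$ nonzero entries. Hence $\cD(G)$ is the projection of $\{z:Mz\le q\}$ onto the $\mathbf d$-coordinates, intersected with the nonnegative orthant (the intersection adding only nonnegativity facets).

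\emph{Step 3: Farkas and Cramer.} Split $M=[M_{fg}\mid M_{\mathbf d}]$ by columns. By the Farkas description of a projection, $\mathbf d$ lies in that projection iff $\langle M_{\mathbf d}^{\!\top}y,\mathbf d\rangle\le y^{\!\top}q$ for every $y$ in the pointed cone $\cY=\{y\ge 0:M_{fg}^{\!\top}y=0\}$, and it suffices to take $y$ among the extreme rays of $\cY$; so each facet of $\cD(G)$ other than a nonnegativity facet is a positive multiple of $\langle M_{\mathbf d}^{\!\top}y,\mathbf d\rangle\le y^{\!\top}q$ for some extreme ray $y$ of $\cY$. Since $M_{fg}^{\!\top}$ has only $N_0$ rows and entries in $\{0,\pm1\}$, any extreme ray $y$ of $\cY$ has support of size at most $N_0+1$, and by Cramer's rule applied to a square $\{0,\pm1\}$-submatrix of size $|\mathrm{supp}\,y|-1\le N_0$ it may be chosen with nonnegative integer entries bounded by $N_0!$. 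The corresponding facet normal $a=M_{\mathbf d}^{\!\top}y$ is then integral with $|a_i|\le 4\|y\|_\infty\le 4\,N_0!$. For a facet with $a_{i^*}>0$ this gives $a_{i^*}\ge 1$ and $a_{j^*}/a_{i^*}\le 4\,N_0!\le(n^4)!$, the last inequality being an elementary factorial bound using $N_0\le n^4/2$. Combined with Step 1, this proves $\tau_{\mathbf d,i^*,j^*}\le(n^4)!$.

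The step I expect to be the real content is the observation underlying Step 2: that multicommodity-flow feasibility can be captured by a linear program whose coefficient matrix is $\{0,\pm1\}$-valued and of size polynomial in $n$, with \emph{all} dependence on the capacities confined to the right-hand side. This is exactly what decouples the facet directions of $\cD(G)$ from the unbounded capacities and makes a Cramer-type bound possible; the textbook path-decomposition or sign-pattern formulations are exponentially large and would give no bound depending on $n$ alone. The remaining ingredients — that a finite $\tau_{\mathbf d,i^*,j^*}$ is literally a facet slope, the down-monotonicity sign analysis, and the support/determinant estimates — are routine. (The same argument run through the metric-cone form of flow duality, bounding extreme rays of a shortest-path hyperplane arrangement, gives a comparable, in fact slightly stronger, bound.)
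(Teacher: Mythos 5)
Your proposal is correct (up to the routine degenerate cases you explicitly flag), and it rests on the same crucial observation as the paper — feasibility of multicommodity flow is the projection of an LP of size polynomial in $n$ whose coefficient matrix has entries in $\{0,\pm 1\}$, with all dependence on the capacities confined to the right-hand side (the paper's \cref{:lem:existance-of-LP}) — finished off by a Cramer determinant bound; but the middle of the argument is genuinely different. The paper never touches facets of $\cD(G)$: it lifts $\mathbf{d}$ to a feasible point $\mathbf{x}$ of the big LP, characterizes feasible tradeoffs by the homogeneous system of constraints tight at $\mathbf{x}$ (\cref{:cl:intermediate-claim}), maximizes the $i^*$-coordinate in an auxiliary LP with the $j^*$-coordinate fixed to $-1$, and invokes the basic-solution fact (\cref{:thm:folklore-theorem}) so that the finite optimum is a coordinate of the unique solution of a square $\{0,\pm1\}$ system of size at most $n^4$, whence Cramer gives $(n^4)!$. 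You argue dually: via the Farkas/projection cone you show that every non-nonnegativity facet of $\cD(G)$ has an integral normal $M_{\mathbf d}^{\top}y$ arising from an integer extreme ray $y$ with entries at most $N_0!$ and support at most $N_0+1$, and you combine this with down-monotonicity (nonnegative facet normals, finite tradeoff bounded by the directional slope of a tight facet). Your route buys a stronger, point-free conclusion — a uniform bound on all facet slopes of $\cD(G)$, which the paper only mentions informally in the introduction — at the cost of the facet bookkeeping you defer to the degenerate cases (non-full-dimensional $\cD(G)$, and note that in Step 1 you only need the inequality $\tau_{\mathbf d,i^*,j^*}\le a_{j^*}/a_{i^*}$ for \emph{some} tight facet with $a_{i^*}>0$, since several facets may be tight at $\mathbf d$); the paper's route works pointwise at $\mathbf d$ and avoids facet structure entirely. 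A minor remark: your $g$-variable linearization keeps the constraint count polynomial, while the paper's sign-pattern capacity constraints are exponential in $k$; both are harmless, since the paper's bound depends only on the number of variables and your extreme-ray support bound depends only on the number of projected-out variables, so either formulation would serve either argument.
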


\begin{restatable}[Graph Family with Unbounded Tradeoff]{proposition}{UnboundedTradeoffFamilyProposition} \label{:prop:unbounded-tradeoff-family}
For every $M > 0$, there exists a $6$-terminal network $G_M$, a point $\mathbf{d} \in \cD(G_M)$, and two distinct commodities $i^* \in \binom{T}{2}$ and $j^* \in \binom{T}{2}$, for which the maximum tradeoff $\tau_{\mathbf{d},i^*,j^*}$ is finite and $\tau_{\mathbf{d},i^*,j^*} \geq M$.
\end{restatable}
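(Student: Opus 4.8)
The plan is to construct an explicit network on 6 terminals whose demand polytope has a facet with arbitrarily large "directional slope" between two chosen commodities. The starting intuition is that max-flow/min-cut duality in the multicommodity setting means the demand polytope is cut out by inequalities of the form $\sum_i \ell(u_i,v_i)\, d_i \le \sum_e c(e)\,\ell(e)$ over all metrics $\ell$ on $V$ (this is the LP-duality characterization of feasibility). Thus, to make $\tau_{\mathbf d,i^*,j^*}$ large we want a metric $\ell$ that is tight at $\mathbf d$ and for which the ratio $\ell(\text{endpoints of }i^*)/\ell(\text{endpoints of }j^*)$ is large, while simultaneously ensuring this facet is the *unique* one active at $\mathbf d$ (so the tradeoff is finite and equals that ratio). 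The natural way to blow up such a ratio with only a bounded number of terminals is to route commodity $i^*$ through a long path-like gadget (so its "length" in the critical metric is large), while commodity $j^*$ has a short direct connection.

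The concrete construction I would pursue: take terminals $s_1,t_1$ (endpoints of $i^*$) and $s_2,t_2$ (endpoints of $j^*$), plus two auxiliary terminals, and build a "ladder" or "caterpillar" gadget of length roughly $M$ connecting $s_1$ to $t_1$, where each rung has carefully chosen capacities that force any feasible flow for commodity $i^*$ to traverse the whole ladder. Commodity $j^*$ is given a short high-capacity route. The auxiliary terminals are there to pin down the geometry — to make sure the point $\mathbf d$ we pick sits on exactly one facet of $\cD(G_M)$, namely the one corresponding to the ladder metric, rather than on a lower-dimensional face where the tradeoff could be governed by several inequalities at once. Then I would: (1) exhibit $\mathbf d$ and a multicommodity flow realizing it, proving $\mathbf d\in\cD(G_M)$; (2) exhibit the dual metric $\ell$ certifying that $\mathbf d$ is on the boundary, and compute the slope ratio $\ell(s_1,t_1)/\ell(s_2,t_2) \ge M$; (3) verify this facet is the only active one at $\mathbf d$, which by the geometric interpretation in the paper gives $\tau_{\mathbf d,i^*,j^*} = a_{j^*}/a_{i^*}$ finite and $\ge M$. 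For $k>6$ one simply adds isolated terminals.

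The hard part, I expect, will be step (3): controlling the global structure of the demand polytope well enough to certify that $\mathbf d$ lies on *exactly one* facet. Feasibility of a demand vector depends on *all* metrics simultaneously, so ruling out other tight inequalities requires either a clean combinatorial argument (e.g. showing every other cut metric has strict slack at $\mathbf d$ by a direct flow-augmentation argument) or choosing $\mathbf d$ slightly in the relative interior of the ladder facet and perturbing. A secondary difficulty is showing the tradeoff is *finite* at all — i.e. that $\mathbf d$ is genuinely on the boundary and not in the interior of the 2-dimensional slice; this is exactly what the dual metric $\ell$ with $\ell(s_1,t_1),\ell(s_2,t_2)>0$ certifies, but one must check the flow in step (1) is actually optimal (saturates $\ell$), which ties steps (1) and (2) together into a single primal–dual tightness verification. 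Designing the gadget's capacities so that this primal–dual pair is simultaneously clean to write down and yields slope $\Theta(M)$ is where the real work lies; I would look for a self-similar/recursive gadget so the verification reduces to a single inductive step.
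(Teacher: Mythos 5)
There is a genuine gap, and it starts with the direction of the tradeoff. By definition, $\tau_{\mathbf{d},i^*,j^*}$ measures how much $d_{i^*}$ can be \emph{increased} per unit \emph{decrease} of $d_{j^*}$; on a single active facet $\sum_i a_i d_i\le b$ (equivalently, a tight metric inequality with $a_i=\ell(\text{endpoints of }i)$) it equals $a_{j^*}/a_{i^*}=\ell(\text{endpoints of }j^*)/\ell(\text{endpoints of }i^*)$. So to make the tradeoff large you need the \emph{sacrificed} commodity $j^*$ to be the expensive one in the tight constraint (long distance in the critical metric, i.e.\ forced through the long gadget) and the \emph{gained} commodity $i^*$ to be cheap (crossing the bottleneck once). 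Your plan does the opposite: you aim to make $\ell(s_1,t_1)/\ell(s_2,t_2)$ large where $s_1,t_1$ are the endpoints of $i^*$, and you route $i^*$ through the length-$M$ ladder while giving $j^*$ a short high-capacity route. With that choice the active facet has $a_{i^*}\approx M$ and $a_{j^*}\approx 1$, so the quantity you would certify is the reciprocal of the tradeoff, and $\tau_{\mathbf{d},i^*,j^*}$ would come out of order $1/M$, not $\ge M$. (This is also internally inconsistent with the formula $a_{j^*}/a_{i^*}$ you cite in your own step (3).) The fix is to swap the roles — which is exactly what the paper's construction does: the decreased commodity $j^*$ is forced onto a serpentine path that crosses a left/right bottleneck cut $M$ times, while $i^*$ crosses it once via a single capacity-$2M$ edge.

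Beyond the inversion, the proposal leaves the decisive work unspecified and aims at a stronger target than necessary. You give no capacities, no demand point $\mathbf{d}$, no flow, and no metric, and you yourself flag step (3) — certifying that $\mathbf{d}$ lies on \emph{exactly one} facet — as the hard part. That uniqueness is not needed: for the lower bound it suffices to exhibit a single explicit multicommodity flow realizing $\mathbf{d}+\mathbf{d}^{\Delta}$ with $d^{\Delta}_{j^*}<0$ and $d^{\Delta}_{i^*}=M|d^{\Delta}_{j^*}|$, and for finiteness it suffices to prove one valid inequality on the two-dimensional slice through $\mathbf{d}$ whose directional slope is at most $M$. The paper does both combinatorially: the auxiliary terminals carry two ``diagonal'' demands of value $M$ each that exactly saturate the cuts around the top pair and around the bottom pair of terminals, which forces all of commodity $j^*$'s flow onto the length-$M$ path; a capacity count across the left/right cut (total capacity $4M$) then yields $d_{i^*}+M\,d_{j^*}\le 2M$ for every feasible demand agreeing with $\mathbf{d}$ on the auxiliary commodities, giving both finiteness and $\tau_{\mathbf{d},i^*,j^*}=M$. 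Note also that uniqueness of the active facet, besides being hard, is not even the right surrogate for the lower bound: if several facets with $a_{i^*}>0$ were active, the tradeoff would be the \emph{minimum} of their slopes, so you would have to control all of them anyway — whereas exhibiting the explicit flow sidesteps the global structure of $\cD(G_M)$ entirely.
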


\begin{proof}[Proof of \cref{:thm:main-result}]
For every $l \in \bbN$, choose $M > ((l-1)^4)!$
and consider the network $G_M$ from \cref{:prop:unbounded-tradeoff-family}. 
It is then guaranteed by \cref{:prop:tradeoff-upper-bound}
that this network has no exact flow sparsifier with less than $l$ vertices.
\end{proof}

\subsection{Seymour's Conjectures}
\label{:subsec:seymour's-conjectures}

Seymour~\cite{Seymour15} investigated the following flow-sparsification problem.

\begin{restatable}{problem}{SeymoursProblem} \label{:prob:seymour's-problem}
Given a network $G = (V,E,c)$ with integral capacities
together with an infeasible integral demand vector $\mathbf{d}^* \notin \cD(G)$
on the set of commodities $\binom{V}{2}$,
the goal is to apply on $G$ edge contractions
and obtain a small network $G'$ in which the demand $\mathbf{d}^*$ is still infeasible.%
\footnote{Here, the contractions affect demands in the natural way: 
when an edge $(u,v)$ is contracted, 
demands that involve its endpoints $u$ or $v$ carry over to contracted vertices.
Notice that all vertices are viewed as terminals, but this is inconsequential.
}
Such a network $G'$ is said to \emph{compatible} with the problem instance $(G,\mathbf{d}^*)$.
\end{restatable}

Seymour \cite{Seymour15} proved the existence of compatible networks $G'$ for $(G,\mathbf{d}^*)$ with size bounded with respect to the \emph{total demand},
i.e., the sum of the entries in the demand vector $\mathbf{d}^*$,
and asked whether the size of a compatible network can be bounded independently of this parameter.
Seymour actually posed two questions (see Section 1 and Table 1 in~\cite{Seymour15}), which we rephrase and present as two conjectures:
\begin{itemize} 
\item
  There exists $N>0$,
  such that every instance of \cref{:prob:seymour's-problem}
  has a compatible network $G'$ with at most $N$ vertices.
\item
  There exists $N_3>0$, such that every instance of \cref{:prob:seymour's-problem} with $\norm{\mathbf{d}^*}_0 \leq 3$
  has a compatible network $G'$ with at most $N_3$ vertices.
  (As usual, $\norm{\cdot}_0$ is the number of non-zero entries in a vector,
  and in our case it is the number of non-trivial commodities.)
\end{itemize}
The first conjecture clearly implies the second one, and is thus stronger.

We refute both conjectures.
We present below the formal statement and its proof sketch, 
and provide the full proof in \Cref{:sec:proofSeymour}. 
\begin{restatable}{theorem}{RefutationOfSeymoursConjectures} \label{:thm:refutation-of-seymour's-conjectures}
For every $N > 0$, there exists an instance of \cref{:prob:seymour's-problem} with $\norm{\mathbf{d}^*}_0 \leq 3$, for which every compatible network must have more than $N$ vertices.
\end{restatable}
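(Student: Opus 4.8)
The plan is to reduce \cref{:thm:refutation-of-seymour's-conjectures} to \cref{:prop:tradeoff-upper-bound}, paralleling the proof of \cref{:thm:main-result}: for every $M>0$ I would build an instance $(G,\mathbf{d}^*)$ of \cref{:prob:seymour's-problem} with integral capacities and $\norm{\mathbf{d}^*}_0\le 3$ for which every compatible network $G'$ is forced to have, somewhere in $\cD(G')$, a finite maximum tradeoff of at least $M$; then $(|V(G')|^4)!>M$ by \cref{:prop:tradeoff-upper-bound}, and choosing $M>(N^4)!$ yields $|V(G')|>N$. Two observations guide the design. First, the bound $\norm{\mathbf{d}^*}_0\le 3$ is best possible: an infeasible two-commodity demand is certified by a single violated cut (Hu's two-commodity theorem), and contracting the two sides of that cut produces a compatible network on two vertices. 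Second, I would take the support of $\mathbf{d}^*$ to be a \emph{triangle}, $i^*=(a,b)$, $j^*=(a,c)$, $k^*=(b,c)$ (all other vertices of $G$ carrying zero demand), because a triangle has the crucial feature that a contraction of $G$ can merge two of $i^*,j^*,k^*$ into a single commodity only by also identifying the endpoints of the third --- that is, only by trivializing it.

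The construction I envision, in the spirit of the one behind \cref{:prop:unbounded-tradeoff-family} but tailored to this triangle, would produce $G$ with integral capacities so that $\cD(G)$ has a facet $F=\{\sum_{i'}a_{i'}d_{i'}\le b\}$ with $a\ge 0$, $a_{i^*}>0$, and directional slope $a_{j^*}/a_{i^*}\ge M+1$; here I write $e_c$ for the standard basis vector indexed by commodity $c$. I would then pick an integral demand $\mathbf{d}^*$ supported on $\{i^*,j^*,k^*\}$ that lies just outside $\cD(G)$, violating only the facet $F$, and arranged so that
\[
  \mathbf{d}^*-s\,e_{i^*}\in\cD(G) \text{ for some } s>0,
  \qquad
  \mathbf{d}^*-e_{j^*}+M\,e_{i^*}\in\cD(G),
\]
and moreover $d^*_{i^*}+d^*_{j^*}\le\mathrm{mincut}_G(\{a\},\{b,c\})$ together with the two analogous cut inequalities. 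Producing such a $G$ --- in particular, making the steep facet ``visible'' through the three coordinates of the triangle slice alone, and keeping the capacities and $\mathbf{d}^*$ integral (after scaling by a suitable integer) --- is what I expect to be the technical heart of the argument, just as \cref{:prop:unbounded-tradeoff-family} is the crux of \cref{:thm:main-result}.

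Granting such an instance, the reduction is short. Let $G'$ be compatible with $(G,\mathbf{d}^*)$ and let $\mathbf{d}'$ be the image of $\mathbf{d}^*$ under the contractions. Edge contractions only enlarge the demand polytope: a multicommodity flow in $G$ realizing a demand induces one in $G'$ realizing its image, because the contracted edges cancel out in flow conservation; so the image of every feasible demand of $G$ is feasible in $G'$. If the contractions trivialize one of $i^*,j^*,k^*$, say $k^*$ (so $b,c$ are identified in $G'$), then $\mathbf{d}'$ becomes a single-commodity demand of value $d^*_{i^*}+d^*_{j^*}$, and since contractions only increase minimum cuts, the pertinent cut in $G'$ is at least $\mathrm{mincut}_G(\{a\},\{b,c\})\ge d^*_{i^*}+d^*_{j^*}$, so $\mathbf{d}'$ is feasible in $G'$ --- contradicting compatibility. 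Hence none of $i^*,j^*,k^*$ is trivialized, and by the triangle feature none is merged with another commodity, so the three commodities and the two feasible witness points above carry over to $G'$ unchanged. Since $\mathbf{d}'\notin\cD(G')$ and $\mathbf{d}'\ge 0$, the point $\mathbf{d}'$ violates some non-coordinate facet $\{\sum_{i'}a'_{i'}d_{i'}\le b'\}$ of $\cD(G')$, which has $a'\ge 0$ by down-monotonicity. Feasibility of $\mathbf{d}'-s\,e_{i^*}$ forces $a'_{i^*}>0$ (else this point would violate the facet too), and feasibility of $\mathbf{d}'-e_{j^*}+M\,e_{i^*}$ forces $-a'_{j^*}+M a'_{i^*}<0$, i.e.\ $a'_{j^*}/a'_{i^*}>M$. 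Choosing $\mathbf{d}''$ in the relative interior of this facet --- so $\mathbf{d}''$ lies on exactly one facet of $\cD(G')$ --- gives $\tau_{\mathbf{d}'',i^*,j^*}=a'_{j^*}/a'_{i^*}$, which is finite because $a'_{i^*}>0$ and larger than $M$. By \cref{:prop:tradeoff-upper-bound}, $(|V(G')|^4)!>M>(N^4)!$, hence $|V(G')|>N$, as required.
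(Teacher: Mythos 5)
Your overall reduction framework (force every compatible network to contain a point with finite maximum tradeoff exceeding $M$, then invoke \cref{:prop:tradeoff-upper-bound}) is the same as the paper's, and the facet/slope argument you give for the compatible network $G'$ is essentially a re-derivation of the direction $(2)\Rightarrow(1)$ of \cref{:cl:alternative-characterization-of-tradeoff}. The problem is that your proof is conditional on an instance you never construct: a network $G$ with integral capacities whose demand polytope has a facet of directional slope at least $M+1$ that is ``visible'' in the slice spanned by a \emph{triangle} of commodities $i^*=(a,b)$, $j^*=(a,c)$, $k^*=(b,c)$, together with an integral $\mathbf{d}^*$ supported on that triangle, violating only that facet, admitting the two feasible witness points, and satisfying the three cut inequalities. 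You explicitly defer this construction as ``the technical heart,'' but it is exactly the part that carries the content of the theorem; nothing in the paper (and nothing you argue) shows that such a triangle-supported instance exists, and the known unbounded-tradeoff construction ($G_M$ of \cref{:prop:unbounded-tradeoff-family}) has its high-tradeoff pair $i^*,j^*$ on four distinct vertices, not a triangle. So as it stands the proposal has a genuine gap at its core.

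The paper's key idea, which your proposal misses, is that no new construction is needed because the high-tradeoff commodity $i^*$ need not carry any demand in $\mathbf{d}^*$ at all. One first builds a $4$-support instance directly from $G_M$: take the point $\mathbf{d}$ from \cref{:prop:unbounded-tradeoff-family} and let $\mathbf{d}^*$ be $\mathbf{d}$ with one extra unit on $i^*=(t_\mytopleft,t_\mytopright)$; then $\mathbf{d}^*\notin\cD(G_M)$ while $\mathbf{d}$ and $\mathbf{d}+\mathbf{d}^{\Delta}$ are feasible, and \cref{:cl:alternative-characterization-of-tradeoff} forces a tradeoff larger than $(N^4)!$ in every compatible network (here contractions only enlarge $\cD$, as you also observe). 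To reduce the support from $4$ to $3$, the paper shaves one unit of capacity off the edge $(t_\mytopleft,t_\mytopright)$ and simultaneously drops the unit demand on $i^*$ to zero; the resulting instance $(G^M_-,\mathbf{d}^*_-)$ has $\norm{\mathbf{d}^*_-}_0=3$, and any compatible network for it turns into a compatible network for the original $4$-support instance simply by adding the unit of capacity back on the edge joining the (images of) $t_\mytopleft,t_\mytopright$, so the vertex lower bound transfers. This trick also sidesteps your triangle/trivialization machinery entirely: one does not need the support of $\mathbf{d}^*$ to form a triangle, nor the auxiliary cut inequalities, because compatibility is transferred between the two instances rather than re-proved facet-by-facet in $G'$. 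If you want to salvage your route, you would have to actually produce the triangle-supported unbounded-tradeoff instance, which is a task the paper deliberately avoids.
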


\begin{proof}[Proof sketch]
In this sketch, we ignore the issues of bounded $\norm{\mathbf{d}^*}_0$ and of integrality.
Our goal in the proof is to construct an instance for which every compatible network must have a large tradeoff (i.e. have a point with a large finite maximum tradeoff), and thus, by \cref{:prop:tradeoff-upper-bound}, every compatible network must have a large number of vertices.
To do this, we rely on two observations. The first one is that, because contraction operations can only eliminate/relax constraints on the flows, every network $G'$ that is compatible with an instance $(G,\mathbf{d}^*)$ must have $\cD(G') \supseteq \cD(G)$.
The second observation is that the maximum tradeoff has an alternative characterization as follows:
\begin{restatable}{claim}{AlternativeCharacterizationOfTradeoff} \label{:cl:alternative-characterization-of-tradeoff}
Let $G$ be a network, and let $i^*$ and $j^*$ be two distinct commodities in $\binom{V}{2}$. Then, for every $\tau > 0$, the following two statements are equivalent.
\begin{enumerate}
\item
  There exists a point $\mathbf{d} \in \cD(G)$ with a finite $\tau_{\mathbf{d},i^*,j^*} > \tau$.
\item
  There exists a point $\mathbf{d}^* \notin \cD(G)$, a point $\mathbf{\hat{d}} \in \cD(G)$ that results from $\mathbf{d}^*$ by zeroing coordinate $i^*$, and a vector $\mathbf{d}^{\Delta*} \in \bbR^{\binom{T}{2}}$ with $d^{\Delta*}_{j^*} < 0$ and $d^{\Delta*}_{i^*} = - \tau d^{\Delta*}_{j^*}$ while all other entries are zero, such that $\mathbf{d}^* + \mathbf{d}^{\Delta*} \in \cD(G)$. 
  (See \cref{:fig:alternative-tradeoff-characterization} for illustration.) 
\end{enumerate}
\end{restatable}

\cref{:cl:alternative-characterization-of-tradeoff} tells us that in order to preserve a large tradeoff, it is enough to keep one specific point outside of the demand polytope and keep two specific points inside the demand polytope. Together with our first observation, this implies that by creating an instance from a network with a large tradeoff and using the right $\mathbf{d}^*$, we can force all compatible networks to have large tradeoff. The theorem follows from \cref{:prop:unbounded-tradeoff-family} and \cref{:prop:tradeoff-upper-bound}.
\end{proof}

\begin{figure}[htb]
\includegraphics[scale=0.8]{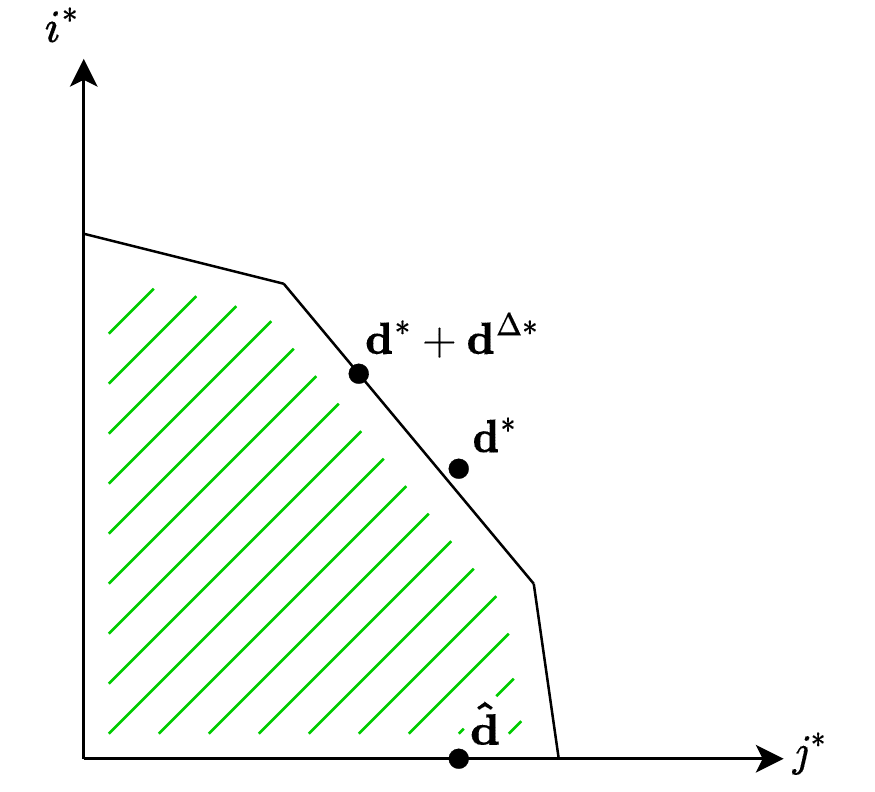}
\centering
\caption{The condition for a large finite maximum tradeoff from 
statement 2 in \cref{:cl:alternative-characterization-of-tradeoff}. 
The two-dimensional plane corresponds to points that differ from $\mathbf{d}^*$ only in coordinates $i^*$ and $j^*$,
and the region inside $\cD(G)$ is shown in green stripes.
}
\label{:fig:alternative-tradeoff-characterization}
\end{figure}

\subsection{Related Work}

There is a large body of work on flow sparsification.
It has began with Leighton and Moitra~\cite{LM10},
who studied approximate sparsifiers
(in contrast to our focus on exact sparsifiers)
that are supported \emph{only} on the terminals.
This setting was studied further by Charikar et al.~\cite{CLLM10}, Makarychev and Makarychev~\cite{MM16}, and Englert et al.~\cite{EGKRTT14},
who tightened the approximation factor, called quality.
Chuzhoy~\cite{Chuzhoy12, Chuzhoy16} introduced a more general setting of approximate flow sparsification, 
where the sparsifier can have non-terminal vertices, 
and she constructed sparsifiers whose size bound depends on the total capacity of all edges incident to the terminals, as well as a version of these sparsifiers that preserves both fractional and integral flows.
Several improved bounds are known for important graph families,
like quasi-bipartite graphs~\cite{AGK14}, trees~\cite{GR16}, and planar~\cite{EGKRTT14}. 

Another concept that is closely related to flow sparsification is \emph{cut sparsification}.
Exact cut sparsifiers were introduced by Hagerup, Katajainen, Nishimura, and Ragde~\cite{HKNR98}, 
under the name \emph{mimicking networks}, 
and tightened bounds for the general case and special graph families
were proved in \cite{CSWZ00,KR13,KR14,KPZ19,GHP20,KR20}. 
Approximate cut sparsifiers were introduced by Moitra~\cite{Moitra09},
and were often studied in conjunction with approximate flow sparsifier,
because of the flow-cut gap.
Another closely related notion is that of
(exact) vertex-cut sparsifiers \cite{KW20, HLW21, CDLKLPSV21, Liu20, BK22}.
There are many other notions of sparsifiers,
e.g., preserving other graph properties like spectrum or distances,
or sparsifiers that have few edges (but the same vertex set),
and listing all of them here would be prohibitive.

\subsection{Preliminaries}

Throughout this paper, we use tilde to denote formal variables, such as $\mathbf{\tilde{x}}$.

Given a matrix $A \in \bbR^{m_1 \times m_2}$ with rows $\mathbf{a_1}^\top,\ldots,\mathbf{a_{m_1}}^\top$, as well as a column vector $\mathbf{b} = (b_1,...,b_{m_1}) \in \bbR^{m_1}$, 
the \emph{feasibility linear program} $A \mathbf{\tilde{x}} \geq \mathbf{b}$ is the problem of assigning values $\mathbf{x} = (x_1,\ldots,x_{m_2})$ to the formal variables $\mathbf{\tilde{x}} = (\tilde{x}_1,\ldots,\tilde{x}_{m_2})$, 
such that each of the inequalities $\{\mathbf{a_i}^\top \mathbf{x} \geq b_i\}_{i \in [m_1]}$ holds. 
Such an assignment $\mathbf{x}$ is called a \emph{feasible solution} of the linear program,
and the set of all feasible solutions is called the \emph{feasible region} of the linear program.

\begin{definition}
Let $A \mathbf{\tilde{x}} \geq \mathbf{b}$ be a linear program with $m_2$ variables, 
and let $S \subseteq [m_2]$ be a subset of the variables. 
Then, the \emph{projection of the linear program} onto the the variables of $S$ is
\[
 \{\mathbf{x}_{|S} : \mathbf{x} \in \bbR^{m_2}, A \mathbf{x} \geq \mathbf{b}\} ,
\]
where $\mathbf{x}_{|S}$ denotes the restriction of $\mathbf{x}$ to the coordinates in $S$.
\end{definition}

The following theorem is folklore. 

\begin{theorem} \label{:thm:folklore-theorem}
Let $A \mathbf{\tilde{x}} \geq \mathbf{b}$ be a linear program with $m_1$ constraints and $m_2$ variables, 
and suppose that $\max\{\mathbf{c}^\top \mathbf{x} : A \mathbf{x} \geq \mathbf{b}\}$ 
is finite. 
Then, there exists a subset $S_1 \subseteq [m_1]$ of constraints and a subset $S_2 \subseteq [m_2]$ of variables,
such that the system of linear equalities $A_{|S_1 \times S_2} \mathbf{\tilde{x}}_{|S_2} = \mathbf{b}_{|S_1}$ has a unique solution, 
and such that extending this solution to the variables outside $S_2$ by setting them to $0$ results in an optimal solution for the objective $\mathbf{c}^\top \mathbf{x}$ under constraint $A \mathbf{x} \geq \mathbf{b}$.%
\end{theorem}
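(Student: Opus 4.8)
The plan is to run the standard extremal argument that extracts a ``vertex-like'' optimal solution, adapted to the fact that the variables here are unconstrained, so the polyhedron $\{\mathbf{x}\in\bbR^{m_2} : A\mathbf{x}\ge\mathbf{b}\}$ need not possess vertices at all. Since $\max\{\mathbf{c}^\top\mathbf{x} : A\mathbf{x}\ge\mathbf{b}\}$ is finite, the feasible region is nonempty and, being a polyhedron on which a linear functional is bounded above, the maximum is attained; let $\mathcal{O}\ne\emptyset$ be the set of optimal solutions. On $\mathcal{O}$ define the integer potential $\Psi(\mathbf{x}) \defeq \norm{\mathbf{x}}_0 + \abs{\{i\in[m_1] : \mathbf{a_i}^\top\mathbf{x} > b_i\}}$, i.e.\ the number of nonzero coordinates plus the number of non-tight constraints; it lies in $\{0,\dots,m_1+m_2\}$, hence attains its minimum at some $\mathbf{x}^*\in\mathcal{O}$. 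Set $S_2\defeq\{j : x^*_j\ne 0\}$ and $S_1\defeq\{i : \mathbf{a_i}^\top\mathbf{x}^*=b_i\}$. I claim this pair works. Since $\mathbf{x}^*$ vanishes outside $S_2$ and $A_{|S_1\times S_2}\,\mathbf{x}^*_{|S_2}=\mathbf{b}_{|S_1}$, it suffices to show that $A_{|S_1\times S_2}$ has trivial kernel: then $\mathbf{x}^*_{|S_2}$ is the unique solution of that system, and extending it by zeros recovers the optimal point $\mathbf{x}^*$.

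The core of the proof is to rule out a nonzero kernel vector by a local-move argument. Suppose $\mathbf{v}\in\bbR^{S_2}\setminus\{0\}$ with $A_{|S_1\times S_2}\mathbf{v}=0$, and let $\bar{\mathbf{v}}\in\bbR^{m_2}$ be its extension by zeros, so $\operatorname{supp}(\bar{\mathbf{v}})\subseteq S_2$ and $\mathbf{a_i}^\top\bar{\mathbf{v}}=0$ for all $i\in S_1$. First, $\mathbf{c}^\top\bar{\mathbf{v}}=0$: for all sufficiently small $\varepsilon>0$ both $\mathbf{x}^*\pm\varepsilon\bar{\mathbf{v}}$ are feasible --- the $S_1$-constraints are unchanged along $\bar{\mathbf{v}}$, the other constraints are strictly satisfied at $\mathbf{x}^*$ and stay so for small $\varepsilon$, and there are no sign constraints on the variables --- so optimality of $\mathbf{x}^*$ forces both $\mathbf{c}^\top\bar{\mathbf{v}}\le 0$ and $\mathbf{c}^\top\bar{\mathbf{v}}\ge 0$. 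Second, orient $\bar{\mathbf{v}}$ so that moving along it decreases $\abs{x^*_j}$ for some $j\in S_2$ with $\bar v_j\ne 0$ (such $j$ exists because $\bar{\mathbf{v}}\ne 0$ is supported on $S_2$, where every coordinate of $\mathbf{x}^*$ is nonzero), and walk from $\mathbf{x}^*$ along $\bar{\mathbf{v}}$. Every constraint value $\mathbf{a_i}^\top(\mathbf{x}^*+t\bar{\mathbf{v}})$ and every coordinate $x^*_j+t\bar v_j$ is affine in $t$, and by the chosen orientation the latter reaches $0$ at a finite $t>0$; hence the set of $t>0$ at which some non-tight constraint becomes tight or some $S_2$-coordinate vanishes is nonempty and finite, with minimum $t^*>0$. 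Put $\mathbf{x}'\defeq\mathbf{x}^*+t^*\bar{\mathbf{v}}$. Then $\mathbf{x}'$ is feasible (on $[0,t^*)$ the $S_1$-constraints hold with equality and the others remain strictly satisfied, by minimality of $t^*$, so feasibility persists at $t^*$ by continuity); it is optimal (since $\mathbf{c}^\top\bar{\mathbf{v}}=0$); its support is contained in $S_2$; and its set of tight constraints contains $S_1$. Moreover, by the definition of $t^*$, at least one of the last two containments is strict, so $\Psi(\mathbf{x}')<\Psi(\mathbf{x}^*)$, contradicting minimality. Hence $A_{|S_1\times S_2}$ has trivial kernel, completing the proof.

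The step I expect to need the most care is the ``walk to the first boundary event'' in the second paragraph: one must verify that feasibility is preserved along the whole segment up to $t^*$ and not merely infinitesimally, that $t^*$ is both positive and finite, and that the event attained at $t^*$ genuinely enlarges the tight set or shrinks the support; all of this is routine once phrased via affine functions of the single parameter $t$, but easy to fumble. The remaining pieces are immediate --- existence of the $\Psi$-minimizer, the two-sided inequality yielding $\mathbf{c}^\top\bar{\mathbf{v}}=0$, and the final rank/uniqueness bookkeeping --- and the degenerate cases are fine, e.g.\ if $\mathbf{x}^*=\mathbf{0}$ is optimal then $S_2=\emptyset$, the system has the empty vector as its unique solution, and $\mathbf{b}_{|S_1}=\mathbf{0}$ since $S_1=\{i : b_i=0\}$. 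An alternative route would substitute $x_j=x_j^+-x_j^-$ with $x_j^\pm\ge 0$, add slacks to reach standard form $\{\mathbf{y} : A'\mathbf{y}=\mathbf{b}',\ \mathbf{y}\ge\mathbf{0}\}$, and quote the existence of an optimal basic feasible solution --- but translating a basis back into the precise pair $(S_1,S_2)$ demanded here is itself some bookkeeping, so the direct argument above seems cleaner. Note that no nondegeneracy hypothesis is available, which is precisely why the potential $\Psi$ is needed rather than simply ``pick a vertex''.
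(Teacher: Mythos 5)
The paper never proves this statement --- it is invoked as folklore with no argument given (the appendix proves only \cref{:lem:existance-of-LP}) --- so there is no in-paper proof to compare against. Your purification argument is correct and complete: minimizing the potential $\Psi(\mathbf{x})=\norm{\mathbf{x}}_0+\abs{\{i:\mathbf{a_i}^\top\mathbf{x}>b_i\}}$ over the (attained, since the objective is bounded on a polyhedron) optimal set, and then ruling out a nonzero kernel vector of $A_{|S_1\times S_2}$ by the two-sided perturbation (giving $\mathbf{c}^\top\bar{\mathbf{v}}=0$) followed by the walk to the first boundary event, correctly yields injectivity of $A_{|S_1\times S_2}$ and hence uniqueness of the solution extending to the optimal point; the choice of $\Psi$ also properly handles the fact that the feasible region has free variables and may be vertex-free, and the degenerate case $S_2=\emptyset$ is treated correctly. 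This is the standard route one would expect behind the folklore claim (an alternative being the reduction to standard form and an optimal basic feasible solution, which you reasonably set aside as heavier bookkeeping), so it fills the gap the paper leaves implicit.
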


\section{Tradeoff Bound (Proof of \cref{:prop:tradeoff-upper-bound})}
\label{:sec:proof-of-tradeoff-upperbound}

This section proves \cref{:prop:tradeoff-upper-bound}.
For convenience, we first recall its statement. 
\TradeoffUpperBoundProposition*

Our proof of \cref{:prop:tradeoff-upper-bound} relies on the following well-known lemma. For completeness, a proof of this lemma is presented in the appendix.
\begin{lemma} \label{:lem:existance-of-LP}
$\cD(G)$ can be described as the projection of a linear program $A \mathbf{\tilde{x}} \geq \mathbf{b}$ with at most $n^4$ variables onto $\binom{|T|}{2}$ of its variables that represent the commodities, such that all entries of $A$ are from $\{-1,0,1\}$.
\end{lemma}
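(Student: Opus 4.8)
\textbf{Proof plan for \cref{:lem:existance-of-LP}.}
The plan is to write down the standard multicommodity-flow LP and bound its size. Fix an arbitrary orientation of $E$, so $|E| \le \binom{n}{2} < n^2/2$, and introduce for each commodity $i \in \binom{T}{2}$ a flow variable $\tilde{f}_i(e)$ for every $e \in E$; this is roughly $\binom{|T|}{2}\cdot |E| < n^2 \cdot n^2 = n^4$ variables, but to get a clean bound I would instead use the \emph{path formulation}: for each commodity $i$ with source $s_i$ and sink $t_i$, a variable $\tilde{x}_P$ for each simple $s_i$--$t_i$ path $P$. This could blow up the variable count, so the cleaner route is the edge formulation combined with the observation below; let me reconsider and commit to the edge formulation. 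In the edge formulation the variables are $\{\tilde{f}_i(e)\}_{i,e}$ together with the target demand variables $\tilde{d}_i$ for $i \in \binom{T}{2}$; the total is at most $\binom{|T|}{2}|E| + \binom{|T|}{2} \le \binom{n}{2}\big(\binom{n}{2}+1\big) \le n^4$ (for $n$ not too small; the small cases can be padded), which gives the required variable count.

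Next I would write the constraints. Flow conservation at each non-terminal vertex $v$ and for each commodity $i$ is $\sum_{e \text{ into } v} \tilde{f}_i(e) - \sum_{e \text{ out of } v} \tilde{f}_i(e) = 0$, and at the source/sink of commodity $i$ the net flow equals $\pm \tilde{d}_i$; each such equality is encoded as two inequalities ($\ge$ and $\le$), and every coefficient appearing is in $\{-1,0,1\}$ — the variable $\tilde{d}_i$ enters with coefficient $\pm 1$. The capacity constraint $\sum_i |\tilde{f}_i(e)| \le c(e)$ is not linear, so the standard trick is to replace each signed variable: either split $\tilde{f}_i(e)$ into a positive and a negative part (doubling the flow-variable count, still within $n^4$ after a slightly more careful accounting, or one absorbs the factor by noting $|E| \le \binom n2$ is already much smaller than $n^2$), or, more simply, enumerate the $2^{|\binom{T}{2}|}$ sign patterns — but that is exponential and unacceptable. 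So the right move is the positive/negative split: write $\tilde{f}_i(e) = \tilde{f}_i^+(e) - \tilde{f}_i^-(e)$ with $\tilde{f}_i^+,\tilde{f}_i^- \ge 0$, rewrite conservation in these variables, and replace the capacity constraint by the \emph{linear} constraint $\sum_i \big(\tilde{f}_i^+(e) + \tilde{f}_i^-(e)\big) \le c(e)$; one checks that in any feasible point we may assume $\min(\tilde f_i^+(e),\tilde f_i^-(e))=0$, so this faithfully models $\sum_i |f_i(e)|\le c(e)$. All coefficients here are again in $\{0,1\}$ (the $c(e)$ sits on the right-hand side, i.e.\ inside $\mathbf b$, not in $A$), and the nonnegativity constraints $\tilde{f}_i^{\pm}(e) \ge 0$ and $\tilde{d}_i \ge 0$ likewise have $\{0,1\}$ coefficients. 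Converting the $\le$ capacity inequality to the form $A\tilde{\mathbf x} \ge \mathbf b$ flips signs to get coefficients in $\{-1,0,1\}$, as required.

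Finally, $\cD(G)$ is by definition exactly the set of $\mathbf d$ for which some multicommodity flow realizes it, which is precisely the projection of the feasible region of this LP onto the $\binom{|T|}{2}$ coordinates $\{\tilde d_i\}$; this is a polytope (the projection of a polyhedron that is bounded in these coordinates, since each $d_i$ is at most the total capacity of edges at its endpoints). This establishes the claim. The only genuinely fiddly point — and the one I would be most careful about — is the bookkeeping that keeps the variable count at $\le n^4$ after the positive/negative split and the inclusion of the demand variables; the resolution is that $|E|\le\binom n2$, so even $2\binom{|T|}{2}|E| + \binom{|T|}{2}$ is comfortably below $n^4$ for all $n \ge 2$, and tiny $n$ can be handled by a trivial padding argument or absorbed into the stated bound. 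Everything else is routine LP modeling, and I would relegate the detailed verification to the appendix as the paper indicates.
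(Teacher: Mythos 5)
Your proposal is correct, and it reaches the same edge-based LP as the paper for the flow variables, demand variables, and conservation constraints; the genuine divergence is in how the nonlinear capacity constraint $\sum_i |\tilde f_{i,e}| \le c(e)$ is linearized. You split each flow variable into nonnegative parts $\tilde f_i^+(e), \tilde f_i^-(e)$ and impose $\sum_i (\tilde f_i^+(e)+\tilde f_i^-(e)) \le c(e)$, which is a valid model: any feasible LP point projects into $\cD(G)$ since $|\tilde f_i^+(e)-\tilde f_i^-(e)| \le \tilde f_i^+(e)+\tilde f_i^-(e)$, and any realizing flow yields a feasible point via the canonical split (your ``$\min=0$'' normalization is not even needed for this direction). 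The paper instead keeps a single signed variable $\tilde f_{i,e}$ and adds, for each edge, one constraint $\sum_i s_i \tilde f_{i,e} \le c(e)$ per sign pattern $\{s_i\} \in \{\pm1\}^{\binom{T}{2}}$ --- exactly the route you dismissed as ``exponential and unacceptable.'' Note that this dismissal is a misjudgment relative to what the lemma actually requires: only the number of \emph{variables} enters the statement (and, downstream, the $m_2!$ bound in \cref{:prop:tradeoff-upper-bound}), so exponentially many constraints are harmless, and with $|T|$ constant they are few anyway. What each approach buys: the paper's version has the smaller variable count $(1+|E|)\binom{|T|}{2}$ and a trivially transparent equivalence with the absolute-value constraint, at the price of $2^{\binom{|T|}{2}}$ constraints per edge; your version keeps both variables and constraints polynomial at the price of doubling the flow variables (still comfortably within $n^4$, as your accounting shows) plus the small faithfulness argument for the split. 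Both satisfy the lemma as stated; the only shared bookkeeping caveat is that parallel edges should be merged (or one should note $|E| \le \binom{n}{2}$ after merging) so that the edge count, and hence the variable count, is bounded as claimed.
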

Let $G, T, \mathbf{d}$, $i^*$ and $j^*$ be as in \cref{:prop:tradeoff-upper-bound},
and let $A \mathbf{\tilde{x}} \geq \mathbf{b}$ be the linear program promised by \cref{:lem:existance-of-LP}. According to the lemma, $\cD(G)$ is the projection of the program onto $\binom{|T|}{2}$ of its variables. Let $L: \bbR^{m_2} \to \bbR^{\binom{T}{2}}$ be the projection onto these variables. Let $m_1$ and $m_2$ be the number of constraints and the number of variables in the aforementioned linear program. Then, $m_2 \leq n^4$.
Since $\mathbf{d} \in \cD(G)$, there exists a feasible solution $\mathbf{x}$ of $A \mathbf{\tilde{x}} \geq \mathbf{b}$ such that $L(\mathbf{x}) = \mathbf{d}$.
Let $EQ(\mathbf{x}) \subseteq [m_1]$ denote the set of inequalities in $A \mathbf{\tilde{x}} \geq \mathbf{b}$ that are tight at this solution $\mathbf{x}$, and let $A_{EQ(\mathbf{x})} \mathbf{\tilde{x}} \geq \mathbf{b}_{EQ(\mathbf{x})}$ be the linear program defined by this subset of the constraints.
We use the following claim to complete the proof of \cref{:prop:tradeoff-upper-bound}.

\begin{claim} \label{:cl:intermediate-claim}
A value $\tau \geq 0$ is a feasible tradeoff at the point $\mathbf{d}$ between commodities $i^*$ and $j^*$ iff there exists a solution $\mathbf{x}^{\Delta} \in \bbR^{m_2}$ of the linear program $A_{EQ(\mathbf{x})} \mathbf{\tilde{x}} \geq \mathbf{0}$, such that the vector $L\mathbf{x}^{\Delta}$ has $(L\mathbf{x}^{\Delta})_{j^*} = -1$ and $(L\mathbf{x}^{\Delta})_{i^*} = \tau$ while all other entries are zero.
\end{claim}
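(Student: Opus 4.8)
\textbf{Proof plan for Claim~\ref{:cl:intermediate-claim}.}
The plan is to translate the definition of feasible tradeoff into a statement about the linear program $A\mathbf{\tilde{x}} \geq \mathbf{b}$, and then use the fact that moving along the polytope from an interior-of-a-face point is governed precisely by the constraints that are tight there. First I would spell out what a feasible tradeoff $\tau$ means: there is a vector $\mathbf{d}^{\Delta}$ supported on coordinates $i^*, j^*$ with $d^{\Delta}_{j^*} < 0$ and $d^{\Delta}_{i^*} = -\tau d^{\Delta}_{j^*}$ such that $\mathbf{d} + \mathbf{d}^{\Delta} \in \cD(G)$. By scaling (the polytope is convex and $\mathbf{d} \in \cD(G)$, so we may take a short step) and by the normalization in the claim, it suffices to look for $\mathbf{d}^{\Delta}$ of the form $\varepsilon \cdot \mathbf{d}^{\Delta}_0$ where $\mathbf{d}^{\Delta}_0$ has $(\mathbf{d}^{\Delta}_0)_{j^*} = -1$, $(\mathbf{d}^{\Delta}_0)_{i^*} = \tau$, and all other entries zero. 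So a feasible tradeoff $\tau$ corresponds to the existence of $\varepsilon > 0$ with $\mathbf{d} + \varepsilon \mathbf{d}^{\Delta}_0 \in \cD(G)$.

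Next I would prove the two directions. For the ``if'' direction: given $\mathbf{x}^{\Delta}$ with $A_{EQ(\mathbf{x})} \mathbf{x}^{\Delta} \geq \mathbf{0}$ and $L\mathbf{x}^{\Delta} = \mathbf{d}^{\Delta}_0$, consider $\mathbf{x} + \varepsilon \mathbf{x}^{\Delta}$ for small $\varepsilon > 0$. The tight constraints (those in $EQ(\mathbf{x})$) only improve or stay tight, since $\mathbf{a}_r^\top(\mathbf{x} + \varepsilon\mathbf{x}^{\Delta}) = b_r + \varepsilon \mathbf{a}_r^\top \mathbf{x}^{\Delta} \geq b_r$; the slack constraints ($r \notin EQ(\mathbf{x})$) remain satisfied for $\varepsilon$ small enough because strict inequalities are preserved under small perturbations (take $\varepsilon$ below the minimum of slack$_r / |\mathbf{a}_r^\top \mathbf{x}^{\Delta}|$ over the finitely many $r$ with $\mathbf{a}_r^\top \mathbf{x}^{\Delta} < 0$). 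Hence $\mathbf{x} + \varepsilon\mathbf{x}^{\Delta}$ is feasible for $A\mathbf{\tilde{x}} \geq \mathbf{b}$, and its image under $L$ is $\mathbf{d} + \varepsilon \mathbf{d}^{\Delta}_0 \in \cD(G)$, so (a scaled multiple of) $\tau$ — and therefore $\tau$ itself, by down-monotonicity — is a feasible tradeoff. For the ``only if'' direction: if $\tau$ is a feasible tradeoff, pick $\varepsilon > 0$ with $\mathbf{d} + \varepsilon\mathbf{d}^{\Delta}_0 \in \cD(G)$, and let $\mathbf{y}$ be a feasible solution of the LP with $L\mathbf{y} = \mathbf{d} + \varepsilon\mathbf{d}^{\Delta}_0$. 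Set $\mathbf{x}^{\Delta} := \frac{1}{\varepsilon}(\mathbf{y} - \mathbf{x})$. Then $L\mathbf{x}^{\Delta} = \frac{1}{\varepsilon}(\mathbf{d} + \varepsilon\mathbf{d}^{\Delta}_0 - \mathbf{d}) = \mathbf{d}^{\Delta}_0$, which has the required entries, and for any $r \in EQ(\mathbf{x})$ we have $\mathbf{a}_r^\top \mathbf{x}^{\Delta} = \frac{1}{\varepsilon}(\mathbf{a}_r^\top\mathbf{y} - \mathbf{a}_r^\top\mathbf{x}) = \frac{1}{\varepsilon}(\mathbf{a}_r^\top\mathbf{y} - b_r) \geq 0$ since $\mathbf{y}$ is feasible and $b_r$ is exactly the right-hand side. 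Thus $A_{EQ(\mathbf{x})}\mathbf{x}^{\Delta} \geq \mathbf{0}$, as desired.

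The main subtlety — and what I would be careful about — is the scaling/normalization bookkeeping: the claim fixes the direction of $\mathbf{d}^{\Delta}$ to have second coordinate exactly $-1$, whereas a feasible tradeoff only asks for \emph{some} negative value there, and the actual perturbation achieving feasibility may have a different (small) magnitude $\varepsilon$. The resolution is that $\cD(G)$ is a convex down-monotone polytope containing $\mathbf{d}$, so feasibility of $\mathbf{d} + \mathbf{d}^{\Delta}$ for one such $\mathbf{d}^{\Delta}$ in a given direction implies feasibility of all shorter steps in that direction, and the quantity $\tau$ — the ratio $d^{\Delta}_{i^*}/(-d^{\Delta}_{j^*})$ — is scale-invariant; hence passing to the normalized direction $\mathbf{d}^{\Delta}_0$ and an auxiliary small $\varepsilon$ loses nothing. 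Once this is set up cleanly, both directions are just the elementary observation that near a point of a polytope, the feasible directions of motion are exactly those that do not violate the currently-tight constraints, i.e. solutions of the homogeneous system $A_{EQ(\mathbf{x})}\mathbf{\tilde{x}} \geq \mathbf{0}$.
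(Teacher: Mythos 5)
Your proposal is correct and follows essentially the same route as the paper's proof: in one direction you take the difference of LP solutions representing $\mathbf{d}$ and the perturbed demand and observe that the tight constraints yield $A_{EQ(\mathbf{x})}\mathbf{x}^{\Delta} \geq \mathbf{0}$, and in the other you move a sufficiently small $\epsilon$-step along $\mathbf{x}^{\Delta}$ so that the slack constraints stay satisfied. The only cosmetic difference is where the normalization to $(L\mathbf{x}^{\Delta})_{j^*}=-1$ is performed (you divide by $\varepsilon$ upfront, the paper rescales at the end), and your brief appeal to down-monotonicity is unnecessary since, as you note, the tradeoff ratio is scale-invariant.
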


\begin{proof}
For the first direction, assume $\tau \geq 0$ is a feasible tradeoff. Then, by the definition of a feasible tradeoff, there exists a vector $\mathbf{d}^{\Delta} \in \bbR^{\binom{T}{2}}$ with $d^{\Delta}_{j^*} < 0$ and $d^{\Delta}_{i^*} = - \tau d^{\Delta}_{j^*}$ and all other entries being zero, such that $\mathbf{d} + \mathbf{d}^{\Delta} \in \cD(G)$. Thus, there exists a solution $\mathbf{x}'$ of $A \mathbf{\tilde{x}} \geq \mathbf{b}$ such that $L(\mathbf{x}') = \mathbf{d} + \mathbf{d}^{\Delta}$. Let $\mathbf{x}^{\Delta} \defeq \mathbf{x}' - \mathbf{x}$. Then, by the linearity of $L$, we get that $L(\mathbf{x}^{\Delta}) = L(\mathbf{x}') - L(\mathbf{x}) = \mathbf{d}^{\Delta}$. Furthermore, since $\mathbf{x}'$ is a solution of $A \mathbf{\tilde{x}} \geq \mathbf{b}$, it is also a solution of $A_{EQ(\mathbf{x})} \mathbf{\tilde{x}} \geq \mathbf{b}_{EQ(\mathbf{x})}$, and thus
\[
 A_{EQ(\mathbf{x})} \mathbf{x}^{\Delta} = A_{EQ(\mathbf{x})} (\mathbf{x}' - \mathbf{x}) \geq \mathbf{b}_{EQ(\mathbf{x})} - A_{EQ(\mathbf{x})} \mathbf{x} = \mathbf{0}
\]
Since $(L\mathbf{x}^{\Delta})_{j^*} = -|d^{\Delta}_{j^*}|$ and $(L\mathbf{x}^{\Delta})_{i^*} = \tau |d^{\Delta}_{j^*}|$, the claim follows by scaling $\mathbf{x}^{\Delta}$.

For the other direction, assume there exists $\mathbf{x}^{\Delta}$ as in \cref{:cl:intermediate-claim}. Then, for every $\epsilon > 0$, it holds that $(L(\epsilon \mathbf{x}^{\Delta}))_{j^*} < 0$ and $(L(\epsilon \mathbf{x}^{\Delta}))_{i^*} = - \tau (L(\epsilon \mathbf{x}^{\Delta}))_{j^*}$ and that all other entries of $L(\epsilon \mathbf{x}^{\Delta})$ are zero. Furthermore, for each $\epsilon > 0$ it also holds that $A_{EQ(\mathbf{x})} (\epsilon \mathbf{x}^{\Delta}) \geq \mathbf{0}$, and thus
\[
 A_{EQ(\mathbf{x})} (\mathbf{x} + \epsilon \mathbf{x}^{\Delta}) \geq \mathbf{b}_{EQ(\mathbf{x})}
\]
We wish to choose $\epsilon$ such that $\mathbf{x} + \epsilon \mathbf{x}^{\Delta}$ is also a solution of the original program $A \mathbf{\tilde{x}} \geq \mathbf{b}$. To see that this is possible, notice that the constraints that appear in $A \mathbf{\tilde{x}} \geq \mathbf{b}$ but not in $A_{EQ(\mathbf{x})} \mathbf{\tilde{x}} \geq \mathbf{b}_{EQ(\mathbf{x})}$ are not tight at $\mathbf{x}$. Thus, by choosing a small enough $\epsilon > 0$, we get that $\mathbf{x} + \epsilon \mathbf{x}^{\Delta}$ is feasible for the original linear program, which means that $L(\mathbf{x} + \epsilon \mathbf{x}^{\Delta}) \in \cD(G)$. By the linearity of $L$ we get that $\mathbf{d} + L(\epsilon \mathbf{x}^{\Delta}) \in \cD(G)$, and thus, $\tau$ is a feasible tradeoff between $i^*$ and $j^*$ at the point $\mathbf{d}$.
\end{proof}

Using \cref{:cl:intermediate-claim}, we now construct a linear program for finding the maximum tradeoff $\tau_{\mathbf{d},i^*,j^*}$ at the point $\mathbf{d}$. For each $i \in \binom{T}{2}$, let $\tilde{x}_{i}$ denote the variable that $L$ projects onto commodity $i$. We start with the linear program $A_{EQ(\mathbf{x})} \mathbf{\tilde{x}} \geq \mathbf{0}$ and add the constraint $\tilde{x}_{j^*} = -1$ as well as a constraint $\tilde{x}_{i} = 0$ for every $i \in \binom{T}{2} \setminus \{i^*,j^*\}$. Lastly, we add the optimization objective to maximize $\tilde{x}_{i^*}$. Let $A' \mathbf{\tilde{x}} \geq \mathbf{b}'$ denote the new linear program. Notice that the width of the matrix $A'$ is $m_2$, and that all entries of $A'$ and $\mathbf{b}'$ are from $\{-1,0,1\}$.

Remember that one of the conditions of \cref{:prop:tradeoff-upper-bound} is that the maximum tradeoff $\tau_{\mathbf{d},i^*,j^*}$ is finite. Thus, by \cref{:thm:folklore-theorem}, there exists a submatrix $A''$ of $A'$ and a vector $\mathbf{b}''$ whose entries are a subset of the entries of $\mathbf{b}'$, such that the maximum tradeoff $\tau_{\mathbf{d},i^*,j^*}$ is the $i^*$th entry of the unique solution to the system $A'' \mathbf{\tilde{x}} = \mathbf{b}''$ of linear equalities. So, by Cramer's rule, $\tau_{\mathbf{d},i^*,j^*}$ is equal to the ratio of the determinants of two matrices whose entries are all from $\{-1,0,1\}$ and whose size is at most $m_2$. Such matrices must have integer determinants whose absolute value is at most $m_2!$, and thus $\tau_{\mathbf{d},i^*,j^*} \leq m_2!$. By \cref{:lem:existance-of-LP}, $m_2 \leq n^4$, and this concludes the proof of \cref{:prop:tradeoff-upper-bound}.

\section{Graph Family with Unbounded Tradeoff (Proof of \cref{:prop:unbounded-tradeoff-family})}
\label{:sec:unbounded-tradeoff-family}

This section proves \cref{:prop:unbounded-tradeoff-family}.
For convenience, we first recall its statement. 
\UnboundedTradeoffFamilyProposition*

At a high level, we construct a network $G_M$ 
where both commodities $i^*$ and $j^*$ have to cross a cut $(V_\myleft,V_\myright)$.
The idea is that the flow of commodity $j^*$ would have to use a path that crosses this cut $M$ times, while the flow of commodity $i^*$ only has to cross it once.
To force commodity $j^*$ to use such a path, we introduce additional commodities that saturate all the edges that connect this path to the rest of the network.

We now present the network $G_M$, assuming for simplicity that $M$ is an odd integer. The vertices of $G_M$ are naturally divided into two sets, which we denote by $V_\myleft$ and $V_\myright$.
Each side includes three terminals, which are denoted by $t_\mytopleft, t_\mymidleft ,t_\mybotleft$ and $t_\mytopright, t_\mymidright, t_\mybotright$.
The network includes a path of length $M$ from $t_\mymidleft$ to $t_\mymidright$, whose internal vertices are non-terminals, and whose vertices (including the endpoints) alternate between the two sides $V_\myleft$ and $V_\myright$.
Let $e_1,...,e_M$ denote the edges of this path. Thus, each of $e_1,...,e_M$ has one endpoint in $V_\myleft$ and one endpoint in $V_\myright$, which will be denoted by $\myleft(e_i)$ and $\myright(e_i)$.
For example, $\myleft(e_1) = t_\mymidleft$ and $\myright(e_1) = \myright(e_2)$.
The edges $e_1,...,e_M$ have capacity $2$. Furthermore, for each edge $e_i$ in this path, the network includes edges of capacity $1$ connecting $\myleft(e_i)$ to $t_\mytopleft$ and $t_\mybotleft$, and similarly connecting $\myright(e_i)$ to $t_\mytopright,t_\mybotright$.
Denote these edges by $e_{i,\mytopleft},e_{i,\mybotleft},e_{i,\mytopright}$ and $e_{i,\mybotright}$, respectively.
For convenience, we allow edges to appear with multiplicity, e.g., $e_{1,\mytopright}$ and $e_{2,\mytopright}$ are two parallel edges of capacity $1$ because $\myright(e_1)=\myright(e_2)$, although it could effectively be modeled by a single edge of capacity $2$.
Lastly, the network $G_M$ includes an edge $(t_\mytopleft,t_\mytopright)$ of capacity $2M$.
This concludes the description of the network $G_M$, see \cref{:fig:G_5} for illustration.

\begin{figure}[ht]
\includegraphics[width=0.5\textwidth]{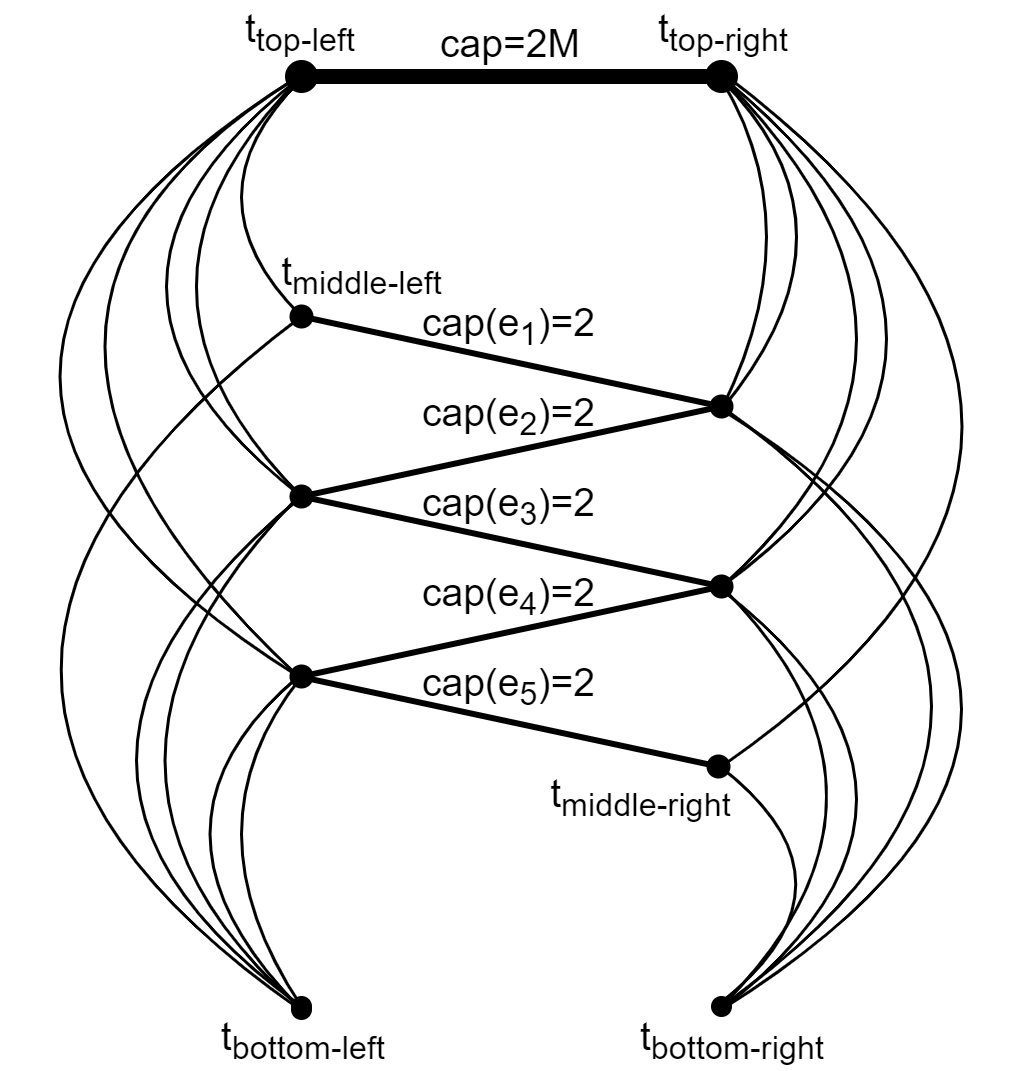}
\centering
\caption{The network $G_5$.}
\label{:fig:G_5}
\end{figure}

Let $i^*$ and $j^*$ be the commodities $i^* \defeq (t_\mytopleft,t_\mytopright)$ and $j^* \defeq (t_\mymidleft,t_\mymidright)$.
The demand vector $\mathbf{d}$ which we will be analyzing has a demand of $M$ for the commodity $(t_\mybotleft,t_\mytopright)$, a demand of $M$ for the commodity $(t_\mytopleft,t_\mybotright)$, a demand of $2$ for commodity $j^* = (t_\mymidleft,t_\mymidright)$, and a demand of zero for every other commodity. It is easy to see that this vector is in the demand polytope; as depicted in \cref{:subfig:first-flow}, we can realize $\mathbf{d}$ by sending $2$ units of flow along the path $(e_1,...,e_M)$, sending $M$ units of flow along the $M$ paths of the form $(t_\mybotleft, \myleft(e_i), t_\mytopleft,t_\mytopright)$, and sending $M$ units of flow along the $M$ paths of the form $(t_\mybotright, \myright(e_i), t_\mytopright,t_\mytopleft)$. These $2M$ paths all overlap on the edge $(t_\mytopleft,t_\mytopright)$, but this is fine because this edge has capacity $2M$.

In order to prove \cref{:prop:unbounded-tradeoff-family}, we need to show that $M$ is a feasible tradeoff between commodities $i^*$ and $j^*$ at the point $\mathbf{d}$, as well as show that the maximum tradeoff $\tau_{\mathbf{d},i^*,j^*}$ is finite.

Let $\mathbf{d}^{\Delta} \in \bbR^{\binom{T}{2}}$ be the vector such that $d^{\Delta}_{j^*} = -2$ and $d^{\Delta}_{i^*} = - M d^{\Delta}_{j^*} = 2M$, and every other entry of $\mathbf{d}^{\Delta}$ is zero. To show that $M$ is a feasible tradeoff between commodities $i^*$ and $j^*$ at the point $\mathbf{d}$, we need to show that $\mathbf{d}' \defeq \mathbf{d} + \mathbf{d}^{\Delta}$ is in the demand polytope $\cD(G_M)$. To get some intuition, let us start by explicitly stating the demand vector $\mathbf{d}'$:
It has a demand of $2M$ for commodity $i^* = (t_\mytopleft,t_\mytopright)$, a demand of $M$ for the commodity $(t_\mybotleft,t_\mytopright)$, a demand of $M$ for the commodity $(t_\mytopleft,t_\mybotright)$, and zero for all other commodities.
Now, it is not hard to see that $\mathbf{d}'$ is indeed in the demand polytope; as depicted in \cref{:subfig:second-flow}, it can be realized by sending the $2M$ units of flow of commodity $i^*$ along the edge $(t_\mytopleft,t_\mytopright)$ of capacity $2M$, while sending $M$ units of flow along the $M$ paths of the form $(t_\mybotleft,\myleft(e_i),\myright(e_i),t_\mytopright)$ and sending $M$ units of flow along the $M$ paths of the form $(t_\mybotright,\myright(e_i),\myleft(e_i),t_\mytopleft)$.

\begin{figure}[ht]
     \centering
     \begin{subfigure}[b]{0.49\textwidth}
         \centering
         \includegraphics[width=\textwidth]{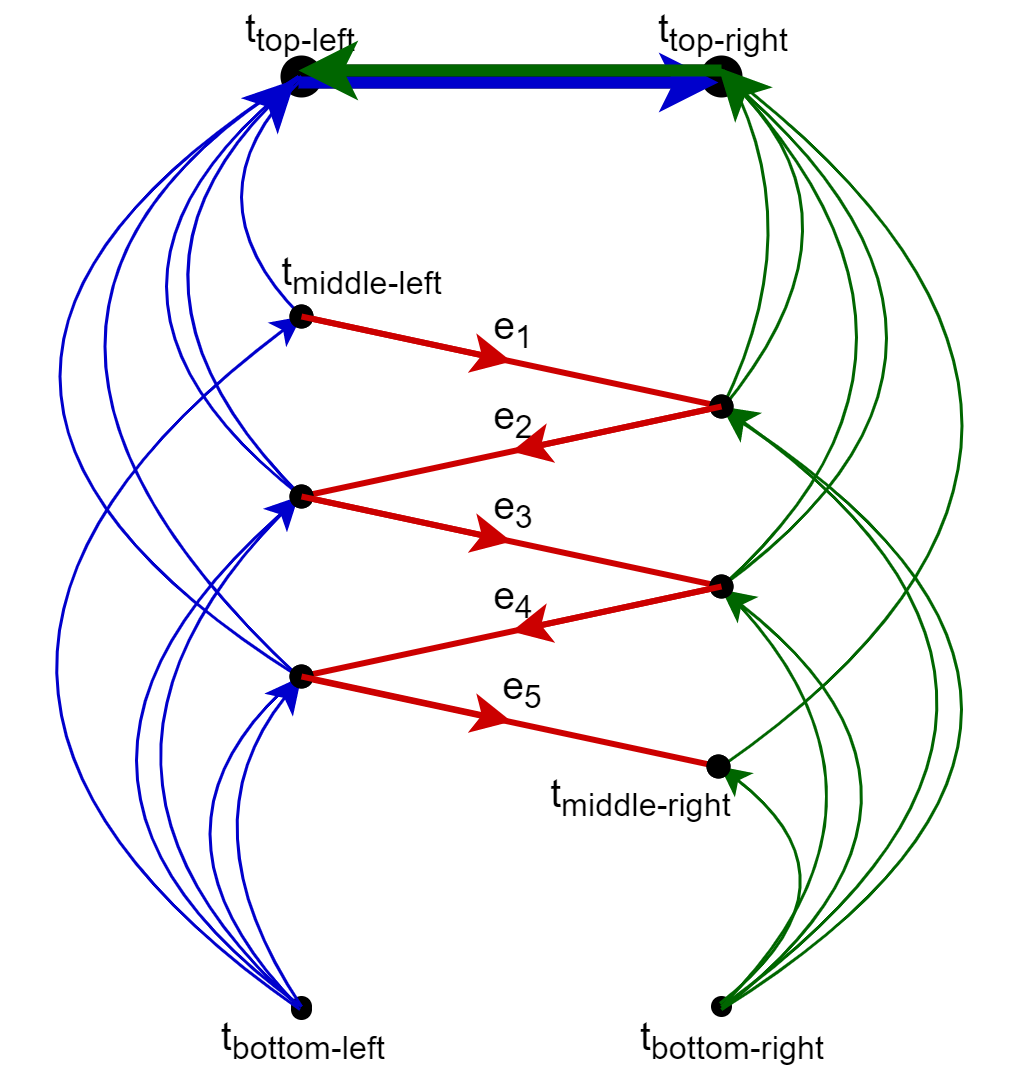}
         \caption{Demand $\mathbf{d}$.}
         \label{:subfig:first-flow}
     \end{subfigure}
     \hfill
     \begin{subfigure}[b]{0.49\textwidth}
         \centering
         \includegraphics[width=\textwidth]{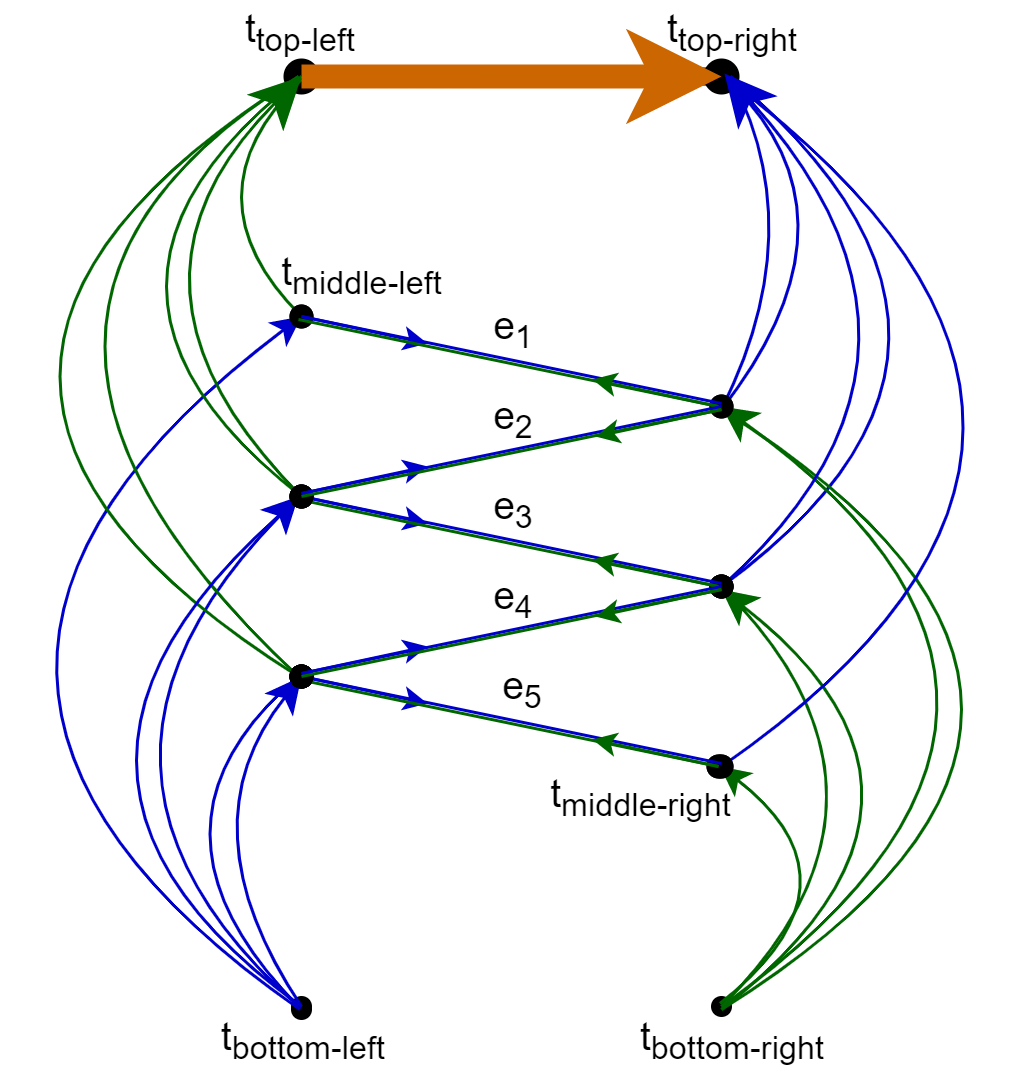}
         \caption{Demand $\mathbf{d}' = \mathbf{d} + \mathbf{d}^{\Delta} $}
         \label{:subfig:second-flow}
     \end{subfigure}
        \caption{Multicommodity flows that realize the demand vectors $\mathbf{d}$ and $\mathbf{d}'$, using colors to distinguish between the commodities. }
        \label{:fig:flows}
\end{figure}

In order to complete the proof of \cref{:prop:unbounded-tradeoff-family}, we just need to show that the maximum tradeoff $\tau_{\mathbf{d},i^*,j^*}$ is finite. To do this, we will use the following claim to show that every feasible tradeoff $\tau$ must satisfy $\tau \leq M$.

\begin{claim}
Let $\mathbf{d}''$ be a demand vector in the the demand polytope $\cD(G_M)$ that has a demand of $M$ for the commodity $(t_\mybotleft,t_\mytopright)$, and a demand of $M$ for the commodity $(t_\mytopleft,t_\mybotright)$; the other entries in $\mathbf{d}''$ can be arbitrary. Then, it must hold that $d''_{i^*} + M \cdot d''_{j^*} \leq 2M$, which means that
\[
     (-1) \cdot \frac{d''_{i^*} - d_{i^*}}{d''_{j^*} - d_{j^*}} = \frac{d''_{i^*}}{2 - d''_{j^*}} \leq M .
\]
\end{claim}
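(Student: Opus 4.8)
The plan is to prove the stated inequality $d''_{i^*} + M\cdot d''_{j^*} \le 2M$ by exhibiting a single edge length function $\ell\colon E\to\bbR_{\ge 0}$ and invoking the easy direction of LP-duality for multicommodity flow: if a demand vector $\mathbf{d}''$ is feasible in a network, then for every $\ell\ge 0$ we have $\sum_{c} d''_{c}\cdot\mathrm{dist}_\ell(c)\le\sum_{e}c(e)\ell(e)$, where $\mathrm{dist}_\ell(c)$ denotes the $\ell$-shortest-path length between the endpoints of commodity $c$. This follows from a flow-path decomposition of any realizing flow $f$: $\sum_e c(e)\ell(e)\ge\sum_e\ell(e)\sum_c|f_c(e)|=\sum_c\sum_e\ell(e)|f_c(e)|\ge\sum_c d''_c\,\mathrm{dist}_\ell(c)$, the last step because each commodity's flow decomposes into source-sink paths, each of $\ell$-length at least $\mathrm{dist}_\ell(c)$.

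The length function I would use is: length $1$ on each of the $M$ path edges $e_1,\dots,e_M$; length $1$ on the edge $(t_\mytopleft,t_\mytopright)$; and length $\tfrac{M-1}{2}$ on each of the $4M$ cross edges $e_{i,\mytopleft},e_{i,\mybotleft},e_{i,\mytopright},e_{i,\mybotright}$. Its total weighted capacity is $\sum_e c(e)\ell(e)=2M\cdot 1+2M\cdot 1+4M\cdot\tfrac{M-1}{2}=2M^2+2M$. Plugging this into the duality inequality, retaining only the four commodities $i^*=(t_\mytopleft,t_\mytopright)$, $j^*=(t_\mymidleft,t_\mymidright)$, $(t_\mybotleft,t_\mytopright)$, $(t_\mytopleft,t_\mybotright)$ (all remaining terms are non-negative) and using the hypothesis $d''_{(t_\mybotleft,t_\mytopright)}=d''_{(t_\mytopleft,t_\mybotright)}=M$, it suffices to prove the four distance lower bounds $\mathrm{dist}_\ell(t_\mytopleft,t_\mytopright)\ge 1$, $\mathrm{dist}_\ell(t_\mymidleft,t_\mymidright)\ge M$, $\mathrm{dist}_\ell(t_\mybotleft,t_\mytopright)\ge M$, and $\mathrm{dist}_\ell(t_\mytopleft,t_\mybotright)\ge M$: these give $2M^2+2M\ge d''_{i^*}\cdot 1+d''_{j^*}\cdot M+M\cdot M+M\cdot M$, i.e. $d''_{i^*}+M d''_{j^*}\le 2M$. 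The reformulation $\frac{d''_{i^*}}{2-d''_{j^*}}\le M$ is then immediate, since the vector $\mathbf{d}$ of the proposition has $d_{i^*}=0$ and $d_{j^*}=2$, and $d''_{j^*}<2$ follows from $d''_{i^*}\ge 0$.

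Each distance lower bound I would prove by a potential (vertex labeling) $\phi$ satisfying $|\phi(u)-\phi(v)|\le\ell(uv)$ for every edge $uv$, which forces $\mathrm{dist}_\ell(x,y)\ge|\phi(x)-\phi(y)|$ for all $x,y$ by the triangle inequality along any path. Writing the $t_\mymidleft$-to-$t_\mymidright$ path as $w_0=t_\mymidleft,w_1,\dots,w_M=t_\mymidright$ (so $e_i=(w_{i-1},w_i)$ and the sides alternate), the bounds for $j^*$ and for $i^*$ are both handled by the single potential $\psi(w_j)\defeq j$, $\psi(t_\mytopleft)=\psi(t_\mybotleft)\defeq\tfrac{M-1}{2}$, $\psi(t_\mytopright)=\psi(t_\mybotright)\defeq\tfrac{M+1}{2}$: one checks $|\psi(u)-\psi(v)|\le\ell(uv)$ edge by edge (using $|\psi(w_{i-1})-\psi(w_i)|=1$, and that every $\myleft(e_i)$ has $\psi$-value in $\{0,2,\dots,M-1\}$ hence within $\tfrac{M-1}{2}$ of $\tfrac{M-1}{2}$, and symmetrically that every $\myright(e_i)$ has $\psi$-value in $\{1,3,\dots,M\}$), which yields $\mathrm{dist}_\ell(t_\mymidleft,t_\mymidright)\ge|\psi(w_0)-\psi(w_M)|=M$ and $\mathrm{dist}_\ell(t_\mytopleft,t_\mytopright)\ge 1$. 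For $\mathrm{dist}_\ell(t_\mybotleft,t_\mytopright)\ge M$ I would use the potential that is $0$ on $t_\mybotleft$, $\tfrac{M-1}{2}$ on every $\myleft(e_i)$, $\tfrac{M+1}{2}$ on every $\myright(e_i)$, $M-1$ on $t_\mytopleft$, $M$ on $t_\mytopright$, and $1$ on $t_\mybotright$; the mirror-image potential ($0$ on $t_\mytopleft$, $M$ on $t_\mybotright$, $1$ on $t_\mytopright$, and the same labels on the path vertices) gives $\mathrm{dist}_\ell(t_\mytopleft,t_\mybotright)\ge M$. Each verification is a short edge-by-edge check.

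The crux, and the reason the naive cut $(V_\myleft,V_\myright)$ only yields the weaker bound $d''_{i^*}+d''_{j^*}\le 2M$, is the choice of the weight $\tfrac{M-1}{2}$ on the cross edges: it is precisely large enough to force commodity $j^*$ (and each blocker commodity) to \quotes{pay} $M$ to cross from $V_\myleft$ to $V_\myright$ while commodity $i^*$ still crosses for only $1$, yet precisely small enough that $\sum_e c(e)\ell(e)$ stays at $2M^2+2M$, so that the duality inequality closes with no slack. Identifying this fractional cut (equivalently, the four potentials that certify its tightness) is the only non-routine step; the rest is bookkeeping.
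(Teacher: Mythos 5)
Your proof is correct, but it takes a genuinely different route from the paper. The paper argues primally and combinatorially: since the cut $(\{t_\mytopleft,t_\mytopright\},\cdot)$ has capacity exactly $2M$ and must carry the $2M$ units of the two demands $(t_\mybotleft,t_\mytopright)$ and $(t_\mytopleft,t_\mybotright)$, its edges are saturated by those commodities (and symmetrically for the bottom cut), so the flow of $j^*$ is forced onto the path $(e_1,\ldots,e_M)$ and crosses $(V_\myleft,V_\myright)$ exactly $M$ times per unit; a capacity count on that cut ($4M$ total) then yields $M d''_{j^*}+M+M+d''_{i^*}\le 4M$. You instead exhibit a dual certificate: a length function $\ell$ (lengths $1$ on $e_1,\ldots,e_M$ and on $(t_\mytopleft,t_\mytopright)$, and $\tfrac{M-1}{2}$ on the cross edges) with $\sum_e c(e)\ell(e)=2M^2+2M$, combined with weak LP duality $\sum_c d''_c\,\mathrm{dist}_\ell(c)\le\sum_e c(e)\ell(e)$ and potential functions certifying the four distance bounds; I checked the potentials edge by edge and they are valid, so the inequality closes exactly as you claim. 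Your approach avoids the saturation/forcing step entirely and reduces the verification to a mechanical, edge-by-edge check of a single fractional cut, which is arguably more robust and makes the tightness of the construction transparent; the paper's argument is shorter to state and more directly explains the mechanism (why $j^*$ must traverse the long path). Two small points: in the mirror potential you should state explicitly the value $M-1$ on $t_\mybotleft$ (it is forced and does work), and your final remark only gives $d''_{j^*}\le 2$, not $d''_{j^*}<2$; the degenerate case $d''_{j^*}=2$, $d''_{i^*}=0$ makes the displayed ratio $0/0$, but this is an artifact of the claim's phrasing (the substantive statement $d''_{i^*}+M\cdot d''_{j^*}\le 2M$, equivalently $d''_{i^*}\le M(2-d''_{j^*})$, is what you prove and what the paper uses), so it does not affect correctness.
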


\begin{proof}
Fix some multicommodity flow that realizes the demand vector $\mathbf{d}''$.
Let $S_\mytop \defeq \{t_\mytopleft,t_\mytopright\}$.
Since the cut $(S_\mytop,\overline{S_\mytop})$ separates the endpoints of each of the two commodities $(t_\mybotleft,t_\mytopright)$ and $(t_\mytopleft,t_\mybotright)$, and since the total capacity of this cut is equal to the sum of the demands on these two commodities, we get that all edges crossing this cut must be saturated by the flows of the two commodities $(t_\mybotleft,t_\mytopright)$ and $(t_\mytopleft,t_\mybotright)$.
Thus, the flow of commodity $j^*$ cannot use any of the edges that cross this cut.
By letting $S_\mybot \defeq \{t_\mybotleft,t_\mybotright\}$ and making the exact same argument, we get that the flow of commodity $j^*$ cannot use any of the edges that cross the cut $(S_\mybot,\overline{S_\mybot})$.
However, the only path between the endpoints of commodity $j^* = (t_\mymidleft,t_\mymidright)$ that does not use any edges that cross either of the two aforementioned cuts is the path $(e_1,...,e_M)$. This means that every unit of flow of commodity $j^*$ must pass $M$ times through the cut $(V_\myleft,V_\myright)$, and therefore the flow of commodity $j^*$ must use up a total of at least $M \cdot d''_{j^*}$ capacity from edges crossing the cut $(V_\myleft,V_\myright)$. Furthermore, because the cut $(V_\myleft,V_\myright)$ separates the endpoints of each of the other three commodities, the flows of commodities $(t_\mybotleft,t_\mytopright)$ and $(t_\mytopleft,t_\mybotright)$ must each use at least $M$ capacity from edges crossing this cut, and the flow of commodity $i^*$ must use at least $d''_{i^*}$ capacity from edges crossing this cut. But the total capacity of the cut $(V_\myleft,V_\myright)$ is $4M$, and thus
\[
 M \cdot d''_{j^*} + M + M + d''_{i^*} \leq 4M ,
\]
which gives us the desired inequality $d''_{i^*} + M \cdot d''_{j^*} \leq 2M$.
\end{proof}
This completes the proof of \cref{:prop:unbounded-tradeoff-family}.

\section{Refutation of Seymour's conjectures (Proof of \cref{:thm:refutation-of-seymour's-conjectures})}
\label{:sec:proofSeymour}

We now prove one direction of \cref{:cl:alternative-characterization-of-tradeoff} 
(namely, that statement 2 implies statement 1), and then use it to prove \cref{:thm:refutation-of-seymour's-conjectures}. 
We omit the other direction of the claim 
because our proof of \cref{:thm:refutation-of-seymour's-conjectures} 
does not use it, at least not in a blackbox manner;
instead, our proof follows its steps implicitly
and picks a vector $\mathbf{d}^*$ with additional properties
not promised by \cref{:cl:alternative-characterization-of-tradeoff}
(namely, having integral entries and a small number of non-zero entries). 

\AlternativeCharacterizationOfTradeoff*

\begin{proof} [Proof of \cref{:cl:alternative-characterization-of-tradeoff} (the direction $(2) \rightarrow (1)$)]
Since the points $\mathbf{d}^* \notin \cD(G)$ and $\mathbf{\hat{d}} \in \cD(G)$ differ only in the $i^*$th coordinate, the line between them must intersect some facet $\sum_{i \in \binom{V}{2}} a_i \tilde{d}_i \leq b$ of the polytope $\cD(G)$ for which $a_{i^*} > 0$. Let $\mathbf{d}$ be the point at which this intersection occurs. Then, $\mathbf{d}$ is in the polytope, and the maximum tradeoff $\tau_{\mathbf{d},i^*,j^*}$ is at most $\frac{a_{j^*}}{a_{i^*}}$, which in particular means it is finite. Furthermore, since $\mathbf{d}$ is on the line between $\mathbf{d}^*$ and $\mathbf{\hat{d}}$, it differs from these two points only in the $i^*$th coordinate, and by the down monotonicity of $\cD(G)$ this implies that $d_{i^*} < d^*_{i^*}$. Thus, $\mathbf{d}$ must have a feasible tradeoff larger than $\tau$; by defining $\mathbf{d}^{\Delta} \defeq \mathbf{d}^* + \mathbf{d}^{\Delta*} - \mathbf{d}$, we get that $\mathbf{d} + \mathbf{d}^{\Delta} = \mathbf{d}^* + \mathbf{d}^{\Delta*} \in \cD(G)$ and that $d^{\Delta}_{i^*} > d^{\Delta*}_{i^*}$ while every other coordinate of $\mathbf{d}^{\Delta}$ is the same as in $\mathbf{d}^{\Delta*}$. This means that $\tau_{\mathbf{d},i^*,j^*} > \tau$, which concludes the proof.
\end{proof}

\RefutationOfSeymoursConjectures*

To keep things simple, we first prove the theorem while dropping the condition $\norm{\mathbf{d}^*}_0 \leq 3$, which is equivalent to refuting the first conjecture presented in \cref{:subsec:seymour's-conjectures}, and then show how to adapt the proof to satisfy the condition $\norm{\mathbf{d}^*}_0 \leq 3$, thereby refuting the second conjecture in \cref{:subsec:seymour's-conjectures}.

\begin{proof} [Proof of \cref{:thm:refutation-of-seymour's-conjectures} (without the condition $\norm{\mathbf{d}^*}_0 \leq 3$)]
Fix any integer $N > 0$.
Our goal is to provide an instance of \cref{:prob:seymour's-problem} in which the demand polytope of every compatible network must include a point with a finite maximum tradeoff greater than $(N^4)!$.
Let $M \defeq (N^4)! + 1$, and let $G_{M}$ be the network from our proof of \cref{:prop:unbounded-tradeoff-family}.
Furthermore let $i^*$, $j^*$, $\mathbf{d}$, $\mathbf{d}^{\Delta}$ be as in that proof.
Let the vector $\mathbf{d}^*$ be the result of increasing coordinate $i^*$ of $\mathbf{d}$ by $1$, and let ${\mathbf{d}^{\Delta*}} \defeq \mathbf{d} + \mathbf{d}^{\Delta} - \mathbf{d}^*$.
Thus, ${d^{\Delta*}}_{j^*} = -2$ and ${d^{\Delta*}}_{i^*} = 2M - 1 > 2 \cdot (N^4)!$, and every other entry of ${\mathbf{d}^{\Delta*}}$ is zero. Furthermore, $\mathbf{d}^* \notin \cD(G_{M})$.
Notice that according to the construction of $G_{M}$ and $\mathbf{d}^*$, the capacities of $G_{M}$ and the entries of $\mathbf{d}^*$ are all integral, and $\norm{\mathbf{d}^*}_0 = 4$.
Now consider the instance of \cref{:prob:seymour's-problem} comprising of these network $G_{M}$ and demand $\mathbf{d}^*$.
Every compatible network $G'$ of this instance must satisfy $\mathbf{d}^* \notin \cD(G')$.
Furthermore, as $G'$ results from $G_{M}$ by edge-contraction operations, it must satisfy $\cD(G_{M}) \subseteq \cD(G')$ and thus $\mathbf{d} \in \cD(G')$ and $\mathbf{d}^* + \mathbf{d}^{\Delta*} = \mathbf{d} + \mathbf{d}^{\Delta} \in \cD(G')$.
Using \cref{:cl:alternative-characterization-of-tradeoff} with $\mathbf{\hat{d}} = \mathbf{d}$, every compatible network $G'$ must include a point that has a finite maximum tradeoff larger than $\frac{{d^{\Delta*}}_{i^*}}{|{d^{\Delta*}}_{j^*}|} > \frac{2 \cdot (N^4)!}{2}$, and the theorem follows using \cref{:prop:tradeoff-upper-bound}.
\end{proof}

\begin{proof} [Proof of \cref{:thm:refutation-of-seymour's-conjectures} (with the condition $\norm{\mathbf{d}^*}_0 \leq 3$)]
Fix $N$, and let $(G_M, \mathbf{d}^*)$ be the instance of \cref{:prob:seymour's-problem} from the previous proof. Thus, every compatible network $G'$ of $(G_M, \mathbf{d}^*)$ must have at least $N$ vertices. However, the condition $\norm{\mathbf{d}^*}_0 \leq 3$ does not hold, in fact $\norm{\mathbf{d}^*}_0 = 4$.

Remove one unit from the capacity of the edge $(t_\mytopleft,t_\mytopright)$ in $G_M$ as well as one unit from the demand of commodity $i^*=(t_\mytopleft,t_\mytopright)$ in $\mathbf{d}^*$, and let $(G^{M}_{-}, \mathbf{d}^*_{-})$ be the resulting instance. The fact that $\mathbf{d}^*$ is not feasible in $G_M$ implies that $\mathbf{d}^*_{-}$ is not feasible in $G^{M}_{-}$, so this is indeed a valid instance of \cref{:prob:seymour's-problem}.
Furthermore, $\norm{\mathbf{d}^*_{-}}_0 \leq 3$ holds.
It thus remains to show that every compatible network $G'_{-}$ of the instance $(G^{M}_{-}, \mathbf{d}^*_{-})$ must have at least $N$ vertices.

Let $G'_{-}$ be any compatible network of the instance $(G^{M}_{-}, \mathbf{d}^*_{-})$. This implies that $G'_{-}$ is formed from $G^{M}_{-}$ using contractions, and that $\mathbf{d}^*_{-}$ is infeasible in $G'_{-}$. Let $t_\mytopleft'$ and $t_\mytopright'$ be the vertices in $G'_{-}$ that correspond to the vertices $t_\mytopleft$ and $t_\mytopright$ in $G^{M}_{-}$. The vertices $t_\mytopleft'$ and $t_\mytopright'$ may have been formed from multiple vertices of $G^{M}_{-}$ through contractions, and may even be the same vertex (i.e. if $t_\mytopleft$ and $t_\mytopright$ were merged together by contractions).
By adding $1$ to the capacity of the edge $(t_\mytopleft',t_\mytopright')$ in $G'_{-}$, we get a network $G'$ that can be created from $G_M$ using contractions, and satisfies that $d^{*}$ is infeasible in $G'$. Thus, $G'$ is a compatible network of $(G_M, \mathbf{d}^*)$, which means it has at least $N$ vertices. The theorem follows since $G_{-}'$ has the same number of vertices as $G'$.
\end{proof}

\section{Future Directions} \label{:sec:further-directions}

Our main result shows that there is no function $f:\bbN \to \bbN$ such that every $k$-terminal network $G$ admits an exact flow sparsifier of size bounded by $f(k)$; in fact, this is not possible even for $k=6$.
We present here three interesting directions for future work.

\begin{direction}
    Extend our main result to planar networks $G$ (and other graph families).
\end{direction}
Our framework offers a potential method to prove such a result; all one has to do is show that planar networks can have arbitrarily large finite tradeoffs (i.e., extend \cref{:prop:unbounded-tradeoff-family}).
A weaker version of such a result could restrict the sparsifier graph to be planar as well.
We point out that \cref{:thm:main-result} holds for input graphs excluding $K_7$ as a minor, and one can try to determine the smallest excluded minor for such results.

\begin{direction}
Extend from exact sparsifiers to quality $1+\epsilon$.
That is, provide bounds on $f(k,\epsilon)$,
the smallest value for which every $k$-terminal network $G$ admits
a $(1+\epsilon)$-quality flow sparsifier $G'$ of size at most $f(k,\epsilon)$.
The quality bound here means that $G'$ has the same terminals as $G$
and satisfies $\cD(G) \subseteq \cD(G') \subseteq (1+\epsilon) \cD(G)$.
\end{direction}
Our results imply a lower bound on $f(k,\epsilon)$, although it is somewhat weak, namely,
$f(k,\epsilon) = \Omega\Big( \Big( \frac{\log (1/\epsilon)}{\log \log (1/\epsilon)} \Big)^{1/4} \Big)$ for every $k \geq 6$ 
and $\epsilon \in (0,\frac{1}{4})$.
This can be strengthened by an $\Omega(k)$ factor,
by simply taking $k/6$ copies of a network with $6$ terminals.
Using different techniques, we can prove stronger lower bounds $f(k,\epsilon) = \tilde{\Omega}(\frac{k}{\epsilon^{1/6}})$ and $f(k,\epsilon) = \tilde{\Omega}(\min\{2^{k/8},\frac{k}{\epsilon^{1/2}}\})$ for every $k \geq 10$ and $\epsilon \in (0,\frac{1}{2})$, where the notation $\tilde{\Omega}(\cdot)$ hides $\textrm{polylog} (1/\epsilon)$ factors.
In fact, it seems quite challenging to get any upper bound on $f(k,\epsilon)$, 
and this gap remains a fascinating open question. 

\begin{direction}
    Extend the result to $k=5$.
\end{direction}

It is known that for every $k \leq 4$,
there exists $N_k' > 0$ such that every $k$-terminal network $G$ has an exact flow sparsifier of size at most $N_k'$, see e.g.~\cite{AGK14}.
Our main result shows that this is not possible for any $k \geq 6$, but it remains open for $k=5$.

{\small 
\bibliographystyle{alphaurl}
\bibliography{robi}
}

\appendix

\section{Proof of \cref{:lem:existance-of-LP}}
\label{:subsec:linear-program-for-flow}

In this section we present a linear program with the properties promised by \cref{:lem:existance-of-LP}.
The linear program that we use is the most standard linear program for the multicommodity flow problem. For each commodity $i \in \binom{T}{2}$ we have a variable $\tilde{d}_i$ representing the number of units of flow shipped by the flow of the $i$-th commodity, as well as variables $\tilde{f}_{i,e}$ representing the value of $f_i$ at each edge $e \in E$. Then, we have a constraint
\[
 \tilde{d}_i \geq 0 ,
\]
and for each vertex $v \in V$, we have a flow-conservation constraint for the flow $f_i$ at the vertex $v$. If $v \notin \{source(i),sink(i)\}$ then the flow-conservation constraint is
\[
 \sum_{\text{$e$ entering $v$}} \tilde{f}_{i,e} + \sum_{\text{$e$ leaving $v$}} (-1) \cdot \tilde{f}_{i,e} = 0 .
\]
However, if $v = source(i)$ then the flow-conservation constraint is
\[
 \tilde{d}_i + \sum_{\text{$e$ entering $v$}} \tilde{f}_{i,e} + \sum_{\text{$e$ leaving $v$}} (-1) \cdot \tilde{f}_{i,e} = 0 ,
\]
and if $v = sink(i)$ then the constraint is
\[
 (-1) \cdot \tilde{d}_i + \sum_{\text{$e$ entering $v$}} \tilde{f}_{i,e} + \sum_{\text{$e$ leaving $v$}} (-1) \cdot \tilde{f}_{i,e} = 0
\]
Lastly, aside from including each of the aforementioned variables and constraints for each commodity $i \in \binom{T}{2}$, we need to include constraints that force the multicommodity flow to respect the capacities of the edges. To do this, for each edge $e \in E$, and each possible choice of signs for the commodities $\{s_i \in \{1,-1\}\}_{i \in \binom{T}{2}}$, we add a constraint
\[
 \sum_{i \in \binom{T}{2}} s_i \cdot \tilde{f}_{i,e} \leq c(e) .
\]
Observe that these constraints are together equivalent to the non-linear constraint 
$\sum_{i \in \binom{T}{2}} |\tilde{f}_{i,e}| \leq c(e)$ that we need to enforce.

This concludes the presentation of the linear program. It is easy to see that the solutions of the program are in one-to-one correspondence to the multicommodity flows in the network, and thus that projecting the feasible region of the program onto the variables $\{\tilde{d}_i\}_{i \in \binom{T}{2}}$ gives the demand polytope $\cD(G)$. Furthermore, the coefficients of the variables in the program are all from $\{-1,0,1\}$. Lastly, since we have exactly $(1 + |E|)$ variables for each commodity $i \in \binom{T}{2}$, we get that the total number of variables in the program is
\[
 (1 + |E|) \cdot \binom{T}{2} \leq n^2 \cdot |T|^2 \leq n^4 .
\]

This concludes the proof of  \cref{:lem:existance-of-LP}.
\end{document}